\pgfplotsset{width=9cm,compat=1.5.1}
\newtheorem{conjecture}{Conjecture}
\newtheorem{definition}{Definition}
\newtheorem{lemma}{Lemma}
\newtheorem{theorem}{Theorem}
\newtheorem{corollary}{Corollary}
\newtheorem{example}{Example}
\newcommand{\tr}{\operatorname{Tr}}
\newcommand{\Sym}[1]{\mathbb{C}^{#1} \vee \mathbb{C}^{#1}}
\newcommand{\RSym}[1]{\mathbb{R}^{#1} \vee \mathbb{R}^{#1}}
\newcommand\Tr{\mathop{\rm Tr}\nolimits}
\newcommand{\defeq}{\stackrel{\smash{\textnormal{\tiny def}}}{=}}
\def\R{\mathbb{R}}
\renewcommand{\Re}[1]{\mathrm{Re}\left(#1 \right)}
\renewcommand{\Im}[1]{\mathrm{Im}\left(#1 \right)}
\begin{document}
\title{Spectral Properties of Symmetric Quantum States \\ and Symmetric Entanglement Witnesses}

\author{
	Gabriel Champagne\footnote{Department of Mathematics \& Computer Science, Mount Allison University, Sackville, NB, Canada E4L 1E4}, 
	\quad Nathaniel Johnston$^{*,}$\footnote{Department of Mathematics \& Statistics, University of Guelph, Guelph, ON, Canada N1G 2W1}\!\!\!, \ \ \quad Mitchell MacDonald$^{*}$\!\!, \quad and \quad 
	Logan Pipes$^{*}$
}

\maketitle

\begin{abstract}
	We introduce and explore two questions concerning spectra of operators that are of interest in the theory of entanglement in symmetric (i.e., bosonic) quantum systems. First, we investigate the inverse eigenvalue problem for symmetric entanglement witnesses---that is, we investigate what their possible spectra are. Second, we investigate the problem of characterizing which separable symmetric quantum states remain separable after conjugation by an arbitrary unitary acting on symmetric space---that is, which states are separable in every orthonormal symmetric basis. Both of these questions have been investigated thoroughly in the non-symmetric setting, and we contrast the answers that we find with their non-symmetric counterparts.\\
	
	\noindent \textbf{Keywords:} quantum entanglement; symmetric subspace; entanglement witness; inverse problem; absolute separability\\
	
	\noindent \textbf{MSC2010 Classification:} 81P40; 65F18; 15A29  
\end{abstract}

\section{Introduction}

In quantum information theory, there is considerable interest in the problem of determining whether a given quantum state is separable or entangled \cite{GT09,HHH09}. This problem has received the most attention in systems of distinguishable quantum particles, where separability and entanglement are modeled by the tensor (i.e., Kronecker) product (see any number of textbooks on quantum information theory, like \cite{NC00,Wat18}). However, this separability problem has also been explored in systems of indistinguishable particles \cite{ESBL02,TG10,Yu16}, and many of the same techniques are applicable in that setting.

One standard method of showing that a quantum state is entangled is to use an entanglement witness---a Hermitian operator with a non-negative expectation on all separable states (and thus a negative expectation with a given state certifies that it must be entangled). Entanglement witnesses are dual to separable states in a natural sense \cite{SSZ09}, so answering questions about one of these sets typically leads to answers for questions about the other set.

Numerous results are known that give bounds on the possible eigenvalues of (non-symmetric) entanglement witnesses---see \cite{JP18,Sar08} and the references therein. However, to our knowledge the only known result about the spectrum of \emph{symmetric} entanglement witnesses is that they can have at most $\binom{d}{2} = d(d-1)/2$ negative eigenvalues \cite[Corollary~15]{LJ20}, and this bound is tight (i.e., for all $d$ there is a symmetric entanglement witness with this many negative eigenvalues). Equivalently, since they have $\binom{d+1}{2} = d(d+1)/2$ eigenvalues in total, at least $\binom{d+1}{2} - \binom{d}{2} = d$ of their eigenvalues must be non-negative. We explore the problem of bounding the spectra of symmetric entanglement witnesses in more depth in the coming sections. We almost completely characterize the possible spectra of symmetric entanglement witnesses in the $d = 2$ case in Theorem~\ref{thm:2x2_eigs}, and we derive other spectral facts and bounds throughout the rest of Section~\ref{sec:witness_spectra}.

A rather complementary question asks which quantum states have the property that they can be determined to be separable solely based on their eigenvalues. Such states are said to be \emph{absolutely separable} \cite{KZ00}, and they have been completely characterized in small dimensions \cite{Joh13} and have been bounded in all dimensions \cite{Hil07}. We introduce the analogous problem for \emph{symmetric} separable states---we ask which spectra are such that they guarantee separability of a symmetric quantum state. We completely solve this problem in the $d = 2$ case (Theorem~\ref{thm:asymppt_from_eigenvalues_2dim}) and provide bounds in all dimensions (Theorem~\ref{thm:asymppt_from_eigenvalues} and~Corollary~\ref{cor:asymppt_from_eigenvalues_3dim}).

This paper is organized as follows: In Section~\ref{sec:math_prelims}, we introduce our notation and terminology, as well as the relevant mathematical background for the problems that we consider; in Section~\ref{sec:decomp_sew}, we introduce some simple lemmas that we will need throughout the rest of the paper; in Section~\ref{sec:abs_sep}, we present and prove our main results about the absolute separability problem for symmetric states; in Section~\ref{sec:witness_spectra}, we present and prove our main results about the spectra of symmetric entanglement witnesses; and in Section~\ref{sec:conclusions} we close with some open questions and potential directions for future research.

\section{Mathematical Preliminaries}\label{sec:math_prelims}

We use $\mathcal{L}(\mathcal{V})$ to denote the set of linear maps acting on a vector space $\mathcal{V}$. All vector spaces that we consider will be finite-dimensional, so we represent members of $\mathcal{L}(\mathcal{V})$ as $\mathrm{dim}(\mathcal{V}) \times \mathrm{dim}(\mathcal{V})$ matrices when it is convenient to do so. The vector space of (ordered) tuples of $d$ complex numbers is denoted by $\mathbb{C}^d$, and the standard basis vectors in $\mathbb{C}^d$ are denoted by $\mathbf{e_j} = (0,0,\ldots,0,1,0,\ldots,0)$ for $1 \leq j \leq d$, where the single ``$1$'' occurs in the $j$-th entry.

In quantum information theory, a \emph{quantum state} is a positive semidefinite (PSD) Hermitian matrix $\rho \in \mathcal{L}(\mathcal{V})$ with $\tr(\rho) = 1$, where $\mathcal{V}$ is a vector space over the field $\mathbb{C}$ of complex numbers. In many sources, the explicit choice of $\mathcal{V} = \mathbb{C}^d$ is made, but for us it will sometimes be convenient to make other choices for $\mathcal{V}$ instead. Whenever we use lowercase Greek letters like $\rho$ or $\sigma$, we are implicitly assuming that they represent quantum states and thus have these positive semidefiniteness and trace properties.

Throughout this work, we frequently consider the eigenvalues of Hermitian operators. We always sort them in non-increasing order and denote them by either $\lambda_1 \geq \lambda_2 \geq \lambda_3 \geq \cdots$ (typically for eigenvalues of quantum states) or $\mu_1 \geq \mu_2 \geq \mu_3 \geq \cdots$ (typically for eigenvalues of entanglement witnesses). We similarly denote singular values by $\sigma_1 \geq \sigma_2 \geq \sigma_3 \geq \cdots$.

\subsection{Separability and Entanglement}

A quantum state $\rho \in \mathcal{L}(\mathbb{C}^{d_1} \otimes \mathbb{C}^{d_2})$ is called \emph{separable} \cite{Wer89} if there exist vectors $\{\mathbf{v_j}\} \subset \mathbb{C}^{d_1}$ and $\{\mathbf{w_j}\} \subset \mathbb{C}^{d_2}$ such that
\begin{align}\label{eq:sep_decomp}
	\rho = \sum_{j} \mathbf{v_j}\mathbf{v_j}^* \otimes \mathbf{w_j}\mathbf{w_j}^*,
\end{align}
where ``$\otimes$'' is the tensor (i.e., Kronecker) product. If $\rho$ is not separable then it is called \emph{entangled}.

Determining whether or not a quantum state is separable is a hard problem \cite{Gur03,Gha10}, but one useful method of partially solving this problem is to use the \emph{partial transpose} operation that is defined on matrices $A = \sum_j B_j \otimes C_j \in \mathcal{L}(\mathbb{C}^{d_1} \otimes \mathbb{C}^{d_2})$ by
\[
	A^\Gamma := \sum_j B_j \otimes C_j^T.
\]
Indeed, if $\rho$ is separable then $\rho^\Gamma$ is positive semidefinite \cite{HHH96,Per96}. If a quantum state $\rho$ is such that $\rho^\Gamma$ is positive semidefinite then we say that it has \emph{positive partial transpose (PPT)}, and we just stated (in slightly different language) that the set of separable states is a subset of the set of PPT states. Remarkably, the converse also holds (i.e., every PPT state is separable) when $d_1 d_2 \leq 6$ \cite{Sto63,Wor76,HHH96}, but it fails in larger dimensions.

One method of demonstrating entanglement in higher dimensions is to use an \emph{entanglement witness}---a Hermitian matrix $W \in \mathcal{L}(\mathbb{C}^{d_1} \otimes \mathbb{C}^{d_2})$ that has property that
\[
    (\mathbf{v} \otimes \mathbf{w})^*W(\mathbf{v} \otimes \mathbf{w}) \geq 0 \quad \text{for all} \quad \mathbf{v} \in \mathbb{C}^{d_1} \ \ \text{and} \ \ \mathbf{w} \in \mathbb{C}^{d_2}.\footnote{Many sources also require that $W$ not be positive semidefinite in order to qualify as an entanglement witness. We do not impose that restriction here, because (a) it is trivial to impose or not impose as needed, and (b) spectral conditions like the ones we will discuss in this paper are much easier to state without the not-PSD restriction.}
\]
Constructing entanglement witnesses is typically difficult, but once we have one, simply checking that a particular quantum state $\rho$ is such that $\tr(W\rho) < 0$ shows that it must be entangled.

One family of entanglement witnesses that are easy to construct are those of the form $W = X^\Gamma + Y$, where $X, Y \in \mathcal{L}(\mathbb{C}^{d_1} \otimes \mathbb{C}^{d_2})$ are Hermitian positive semidefinite. Matrices of this form are called \emph{decomposable} entanglement witnesses, and they really are entanglement witnesses since
\begin{align*}
    (\mathbf{v} \otimes \mathbf{w})^*W(\mathbf{v} \otimes \mathbf{w}) & = (\mathbf{v} \otimes \mathbf{w})^*X^\Gamma(\mathbf{v} \otimes \mathbf{w}) + (\mathbf{v} \otimes \mathbf{w})^*Y(\mathbf{v} \otimes \mathbf{w}) \\
    & = \tr\big(X(\mathbf{v}\mathbf{v}^* \otimes \overline{\mathbf{w}\mathbf{w}^*})\big) + (\mathbf{v} \otimes \mathbf{w})^*Y(\mathbf{v} \otimes \mathbf{w}) \geq 0,
\end{align*}
where the final inequality follows from non-negativity of all matrices involved. However, the only quantum states that decomposable entanglement witnesses can show are entangled are those that are not PPT.

\subsection{The Symmetric Subspace and Symmetric States}

The \emph{symmetric subspace} of $\mathbb{C}^d \otimes \mathbb{C}^d$ is the set
\[
    \mathbb{C}^d \vee \mathbb{C}^d \defeq \mathrm{span}\big\{ \mathbf{v} \otimes \mathbf{v} : \mathbf{v} \in \mathbb{C}^d \big\},
\]
which has dimension $\binom{d+1}{2} = d(d+1)/2$ \cite[Theorem~3.1.10]{JohALA}. The orthogonal projection onto the symmetric subspace ($P_{\vee}$) is completely specified by the fact that
\[
    P_{\vee}(\mathbf{e_i} \otimes \mathbf{e_j}) = \frac{1}{2}(\mathbf{e_i} \otimes \mathbf{e_j} + \mathbf{e_j} \otimes \mathbf{e_i}) \quad \text{for all} \quad 1 \leq i, j \leq d,
\]
where $\mathbf{e_j}$ is the $j$-th standard basis vector of $\mathbb{C}^d$ (i.e., $\mathbf{e_j}$ is the vector with $1$ in its $j$-th entry and $0$ in elsewhere).

A quantum state $\rho$ is called \emph{symmetric} if $\rho \in \mathcal{L}(\mathbb{C}^d \vee \mathbb{C}^d)$. Alternatively, we say that $\rho \in \mathcal{L}(\mathbb{C}^d \otimes \mathbb{C}^d)$ is symmetric if $P_{\vee}\rho P_{\vee} = \rho$, and we switch freely between these two viewpoints as there is an obvious isomorphism between them. For our purposes, the only salient difference between these two perspectives is that a symmetric quantum state in $\mathcal{L}(\mathbb{C}^d \vee \mathbb{C}^d)$ has exactly $d(d+1)/2$ eigenvalues, whereas the corresponding symmetric quantum state in $\mathcal{L}(\mathbb{C}^d \otimes \mathbb{C}^d)$ has $d^2$ eigenvalues, but only $d(d+1)/2$ of them are potentially non-zero: the other $d(d-1)/2$ eigenvalues necessarily equal $0$ since they correspond to eigenvectors in the $(d(d-1)/2)$-dimensional subspace of $\mathbb{C}^d \otimes \mathbb{C}^d$ that is orthogonal to $\mathbb{C}^d \vee \mathbb{C}^d$.

\subsection{Symmetric Separability and Symmetric Entanglement Witnesses}

A symmetric quantum state $\rho \in \mathcal{L}(\mathbb{C}^d \otimes \mathbb{C}^d)$ is separable if and only if there exist vectors $\{\mathbf{v_j}\} \subset \mathbb{C}^d$ such that \cite{Yu16}
\[
	\rho = \sum_{j} \mathbf{v_j}\mathbf{v_j}^* \otimes \mathbf{v_j}\mathbf{v_j}^*.
\]
That is, if $\rho$ is symmetric and separable then the separable decomposition~\eqref{eq:sep_decomp} can be chosen so that $\mathbf{w_j} = \mathbf{v_j}$ for all $j$, so it makes sense to discuss separability of operators in $\mathcal{L}(\mathbb{C}^d \vee \mathbb{C}^d)$.

A \emph{symmetric entanglement witness} is a Hermitian operator $W \in \mathcal{L}(\mathbb{C}^d \vee \mathbb{C}^d)$ with the property that
\[
    (\mathbf{v} \otimes \mathbf{v})^*W(\mathbf{v} \otimes \mathbf{v}) \geq 0 \quad \text{for all} \quad \mathbf{v} \in \mathbb{C}^{d},
\]
or equivalently it is a Hermitian matrix $W = P_{\vee} W P_{\vee} \in \mathcal{L}(\mathbb{C}^d \otimes \mathbb{C}^d)$ with the same property \cite{TG10,Yu16}.

\section{Decomposable Symmetric Entanglement Witnesses}\label{sec:decomp_sew}

We begin by considering symmetric entanglement witnesses of the following special form:

\begin{definition}\label{defn:dec_sew}
    We say that $W \in \mathcal{L}(\mathbb{C}^d \otimes \mathbb{C}^d)$ is a \emph{decomposable symmetric entanglement witness} if there exist positive semidefinite $X,Y \in \mathcal{L}(\mathbb{C}^d \otimes \mathbb{C}^d)$ satisfying $P_{\vee}XP_{\vee} = X$, $P_{\vee}YP_{\vee} = Y$, and
    \[
        W = P_{\vee} X^\Gamma P_{\vee} + Y.
    \]
\end{definition}

It is straightforward to see that if $W$ is a decomposable symmetric entanglement witness then it really is, as its name suggests, a symmetric entanglement witness: for all separable $\mathbf{v} \in \mathbb{C}^d$ we have
\begin{align*}
    (\mathbf{v} \otimes \mathbf{v})^*W(\mathbf{v} \otimes \mathbf{v}) & = (\mathbf{v} \otimes \mathbf{v})^*P_{\vee} X^\Gamma P_{\vee}(\mathbf{v} \otimes \mathbf{v}) + (\mathbf{v} \otimes \mathbf{v})^*Y(\mathbf{v} \otimes \mathbf{v}) \\
    & \geq (\mathbf{v} \otimes \mathbf{v})^* X^\Gamma (\mathbf{v} \otimes \mathbf{v}) \\
    & = \tr\big( X^\Gamma (\mathbf{v}\mathbf{v}^* \otimes \mathbf{v}\mathbf{v}^*) \big) \\
    & = \tr\big( X (\mathbf{v}\mathbf{v}^* \otimes \overline{\mathbf{v}\mathbf{v}^*}) \big) \geq 0.
\end{align*}

However, it is \emph{not} the case that all symmetric entanglement witnesses are decomposable, even in the $d = 2$ case:

\begin{example}\label{exam:wit_not_dec}
    The matrix
    \[
        W = \begin{bmatrix}
            0 & 0 & 0 & -2 \\
            0 & 1 & 1 & 0 \\
            0 & 1 & 1 & 0 \\
            -2 & 0 & 0 & 0
        \end{bmatrix} \in \mathcal{L}(\mathbb{C}^2 \otimes \mathbb{C}^2)
    \]
    is a symmetric entanglement witness since $P_{\vee} W P_{\vee} = W$ and
    \begin{align*}
        (\mathbf{v} \otimes \mathbf{v})^* W (\mathbf{v} \otimes \mathbf{v})
        & = \begin{bmatrix}
            \overline{v_1}^2 & \overline{v_1v_2} & \overline{v_1v_2} & \overline{v_2}^2
        \end{bmatrix} \begin{bmatrix}
            -2v_2^2 \\ 2v_1v_2 \\ 2v_1v_2 \\ -2v_1^2
        \end{bmatrix}\\
        & = -2\overline{v_1}^2v_2^2 + 2(\overline{v_1v_2})(v_1v_2) + 2(\overline{v_1v_2})(v_1v_2) -2v_1^2\overline{v_2}^2 \\
        & = 8\left(\Re{v_1}\Im{v_2} - \Im{v_1}\Re{v_2} \right)^2 \\
        & \geq 0
    \end{align*}
    for all $\mathbf{v} \in \mathbb{C}^2$. However, it is \emph{not} a decomposable symmetric entanglement witness: suppose $X,Y \in \mathcal{L}(\mathbb{C}^d \otimes \mathbb{C}^d)$ are such that $P_{\vee}XP_{\vee} = X$, $P_{\vee}YP_{\vee} = Y$, and $W = P_{\vee}X^\Gamma P_{\vee} + Y$. If $X$ and $Y$ are positive semidefinite, then $0 = w_{1,1} = x_{1,1} + y_{1,1}$ implies $y_{1,1} = 0$, which implies $y_{1,4} = 0$. Then $-2 = w_{1,4} = x_{2,2} + y_{1,4} = x_{2,2}$, which contradicts positive semidefiniteness of $X$.
\end{example}

Example~\ref{exam:wit_not_dec} contrasts with the usual (i.e., non-symmetric) separability problem, where all entanglement witnesses are decomposable in the $d = 2$ case \cite{HHH96}. We could avoid this ``problem'' by removing the requirement that $P_{\vee}XP_{\vee} = X$ from Definition~\ref{defn:dec_sew}. However, we prefer the definition as-is for at least three reasons:

\begin{itemize}
    \item[1)] We want to be able to think of decomposable symmetric entanglement witnesses as being built out of operators in $\mathcal{L}(\Sym{d})$. If $X$ satisfies $P_{\vee}XP_{\vee} = X$ then we can do this: Definition~\ref{defn:dec_sew} could equivalently be rephrased entirely in terms of positive semidefinite operators $X,Y \in \mathcal{L}(\Sym{d})$ that create a symmetric entanglement witness $W \in \mathcal{L}(\Sym{d})$, whereas this is not possible without the $P_{\vee}XP_{\vee} = X$ restriction.
    
    \item[2)] Requiring $P_{\vee}XP_{\vee} = X$ leads to much nicer mathematical structure in general. Essentially none of the results after this point in the paper have ``nice'' analogs if the condition $P_{\vee}XP_{\vee} = X$ is removed. For example, the upcoming Lemma~\ref{lem:real_eigenvalues} describes the eigenvalues of the extreme rays of the convex set of these decomposable symmetric entanglement witnesses, but there is no analogous formula for the extreme rays without the $P_{\vee}XP_{\vee} = X$ condition.
    
    \item[3)] When considering spectral properties (as we do in this paper), it does not seem to matter whether we require $P_{\vee}XP_{\vee} = X$ or not. For example, we will show in the upcoming Theorem~\ref{thm:2x2_eigs} that it has no effect on the possible eigenvalues of the resulting decomposable symmetric entanglement witnesses, at least in the $d = 2$ case.
\end{itemize}

Since our primary interest in these decomposable symmetric entanglement witnesses concerns their spectral properties, we typically just consider the case of Definition~\ref{defn:dec_sew} where $Y = O$. Indeed, adding the positive semidefinite matrix $Y$ to the Hermitian matrix $P_{\vee} X^\Gamma P_{\vee}$ can only increase its eigenvalues \cite[Corollary~4.3.3]{HJ90}, so $Y = O$ is the most extreme and most interesting case.

In order to understand the set of matrices of the form $P_{\vee} X^\Gamma P_{\vee}$ (where $X$ is positive semidefinite) better, we begin by asking what its extreme rays are. A natural first guess is that the extreme rays are the matrices of the form $W = P_{\vee}(\mathbf{v}\mathbf{v}^*)^\Gamma P_{\vee}$, where $\mathbf{v} \in \Sym{d}$. However, it turns out that this is not quite right: the following lemma shows that we can furthermore choose $\mathbf{v}$ to have all of its entries real.

\begin{lemma}\label{lem:real_extreme}
    Suppose $\mathbf{v} \in \Sym{d}$ and $W = P_{\vee}(\mathbf{v}\mathbf{v}^*)^\Gamma P_{\vee}$. Then
    \[
        W = P_{\vee}(\mathbf{x}\mathbf{x}^*)^\Gamma P_{\vee} + P_{\vee}(\mathbf{y}\mathbf{y}^*)^\Gamma P_{\vee},
    \]
    where $\mathbf{x} = \Re{\mathbf{v}}$ and $\mathbf{y} = \Im{\mathbf{v}}$. In particular, the entries of $W$ are all real.
\end{lemma}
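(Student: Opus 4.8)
The plan is to split $\mathbf{v}$ into its real and imaginary parts and show that the only contributions surviving the map $M \mapsto P_{\vee}M^\Gamma P_{\vee}$ are the two ``diagonal'' rank-one pieces. First I would record that $\mathbf{x} = \Re{\mathbf{v}}$ and $\mathbf{y} = \Im{\mathbf{v}}$ both lie in $\Sym{d}$: since the entries of $P_{\vee}$ are real, $P_{\vee}\mathbf{v} = \mathbf{v}$ forces $P_{\vee}\overline{\mathbf{v}} = \overline{\mathbf{v}}$, so $\mathbf{x} = \tfrac12(\mathbf{v}+\overline{\mathbf{v}})$ and $\mathbf{y} = \tfrac{1}{2i}(\mathbf{v}-\overline{\mathbf{v}})$ are symmetric as well. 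Using $\mathbf{x}^* = \mathbf{x}^T$ and $\mathbf{y}^* = \mathbf{y}^T$ (as these vectors are real), expanding gives
\[
\mathbf{v}\mathbf{v}^* = \mathbf{x}\mathbf{x}^* + \mathbf{y}\mathbf{y}^* + i\big(\mathbf{y}\mathbf{x}^* - \mathbf{x}\mathbf{y}^*\big).
\]
Because $M \mapsto P_{\vee}M^\Gamma P_{\vee}$ is $\mathbb{C}$-linear, the lemma reduces to showing that the cross term is annihilated, i.e. $P_{\vee}(\mathbf{y}\mathbf{x}^* - \mathbf{x}\mathbf{y}^*)^\Gamma P_{\vee} = O$. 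Equivalently, it suffices to prove that the map $(\mathbf{a},\mathbf{b}) \mapsto P_{\vee}(\mathbf{a}\mathbf{b}^*)^\Gamma P_{\vee}$ is symmetric in its two arguments whenever $\mathbf{a},\mathbf{b} \in \Sym{d}$ are real.

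The crux is establishing this symmetry. I would do so using the swap operator $S$ on $\mathbb{C}^d \otimes \mathbb{C}^d$, defined by $S(\mathbf{u}\otimes\mathbf{w}) = \mathbf{w}\otimes\mathbf{u}$, which satisfies $P_{\vee} = \tfrac12(I+S)$ and hence $SP_{\vee} = P_{\vee}S = P_{\vee}$. Writing $\mathbf{a} = \sum_{i,j} a_{ij}\,\mathbf{e}_i\otimes\mathbf{e}_j$ and $\mathbf{b} = \sum_{i,j} b_{ij}\,\mathbf{e}_i\otimes\mathbf{e}_j$, membership in $\Sym{d}$ together with reality means the matrices $(a_{ij})$ and $(b_{ij})$ are real and symmetric. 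Expanding in the standard basis, the partial transpose swaps the second-subsystem index of the ket with that of the bra, and a relabeling of the summation indices that uses both $a_{ij}=a_{ji}$, $b_{ij}=b_{ji}$ and the reality of the entries yields the identity
\[
S(\mathbf{a}\mathbf{b}^*)^\Gamma S = (\mathbf{b}\mathbf{a}^*)^\Gamma .
\]
Conjugating by $P_{\vee}$ and invoking $SP_{\vee} = P_{\vee}S = P_{\vee}$ then gives $P_{\vee}(\mathbf{a}\mathbf{b}^*)^\Gamma P_{\vee} = P_{\vee}(\mathbf{b}\mathbf{a}^*)^\Gamma P_{\vee}$, exactly the required symmetry.

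Combining these, the cross term drops out and we obtain
\[
W = P_{\vee}(\mathbf{x}\mathbf{x}^*)^\Gamma P_{\vee} + P_{\vee}(\mathbf{y}\mathbf{y}^*)^\Gamma P_{\vee}.
\]
The ``in particular'' assertion is then immediate: $\mathbf{x}\mathbf{x}^*$ and $\mathbf{y}\mathbf{y}^*$ are real matrices, the partial transpose of a real matrix is real, and $P_{\vee}$ has real entries, so $W$ is real. The main obstacle is the index bookkeeping behind the identity $S(\mathbf{a}\mathbf{b}^*)^\Gamma S = (\mathbf{b}\mathbf{a}^*)^\Gamma$, which is where the symmetry and reality of $\mathbf{a},\mathbf{b}$ are genuinely used; everything else is routine linearity and conjugation. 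One could instead prove the symmetry by a single direct relabeling of the quadruple sum defining $P_{\vee}(\mathbf{a}\mathbf{b}^*)^\Gamma P_{\vee}$, but the swap formulation isolates precisely where the hypotheses enter.
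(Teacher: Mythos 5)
Your proof is correct, but it takes a genuinely different route from the paper's. The paper works in the opposite logical direction: it first proves that the entries of $W$ are all real, by writing $\mathbf{v} = \sum_j \mathbf{w_j} \otimes \mathbf{w_j}$ and carrying out an explicit entrywise computation of $\big[P_{\vee}(\mathbf{v}\mathbf{v}^*)^\Gamma P_{\vee}\big]_{k,\ell;m,n}$, whose symmetrized quadruple sum collapses to a sum of real parts; the decomposition $W = P_{\vee}(\mathbf{x}\mathbf{x}^*)^\Gamma P_{\vee} + P_{\vee}(\mathbf{y}\mathbf{y}^*)^\Gamma P_{\vee}$ is then \emph{deduced} from realness, since the cross term $i\big(P_{\vee}(\mathbf{y}\mathbf{x}^*)^\Gamma P_{\vee} - P_{\vee}(\mathbf{x}\mathbf{y}^*)^\Gamma P_{\vee}\big)$ is $i$ times a real matrix and must therefore vanish. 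You instead attack the cross term directly, via the operator identity $S(\mathbf{a}\mathbf{b}^*)^\Gamma S = (\mathbf{b}\mathbf{a}^*)^\Gamma$ for real $\mathbf{a},\mathbf{b} \in \Sym{d}$ together with $SP_{\vee} = P_{\vee}S = P_{\vee}$, and obtain realness of $W$ as an afterthought. I checked your swap identity: expanding $\mathbf{a}\mathbf{b}^* = \sum a_{ij}\overline{b_{kl}}\,(\mathbf{e}_i\mathbf{e}_k^*)\otimes(\mathbf{e}_j\mathbf{e}_l^*)$, conjugating the partial transpose by $S$ and relabeling indeed requires both the symmetry $a_{ij}=a_{ji}$, $b_{kl}=b_{lk}$ and the reality $a_{ij}\overline{b_{kl}} = b_{kl}\overline{a_{ij}}$, exactly as you say, so the identity holds and your reduction is sound (as is your preliminary observation that $\mathbf{x},\mathbf{y} \in \Sym{d}$ because $P_{\vee}$ is real). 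What your approach buys is modularity: the hypotheses enter in one clean rank-one identity, and the projections are handled abstractly rather than inside a four-index sum. What the paper's approach buys is that it exhibits the real entries of $W$ explicitly, which makes the ``in particular'' clause the substance of the proof rather than a corollary. Both proofs are complete and of comparable length.
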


\begin{proof}
    If we can prove that the entries of $W$ are all real then we will be done, because that would tell us that the entries of
    \begin{align*}
        W & = P_{\vee}(\mathbf{v}\mathbf{v}^*)^\Gamma P_{\vee} \\
        & = P_{\vee}\big((\mathbf{x} + i\mathbf{y})(\mathbf{x} + i\mathbf{y})^*\big)^\Gamma P_{\vee} \\
        & = P_{\vee}(\mathbf{x}\mathbf{x}^*)^\Gamma P_{\vee} + P_{\vee}(\mathbf{y}\mathbf{y}^*)^\Gamma P_{\vee} + i(P_{\vee}(\mathbf{y}\mathbf{x}^*)^\Gamma P_{\vee} - P_{\vee}(\mathbf{x}\mathbf{y}^*)^\Gamma P_{\vee})
    \end{align*}
    are all real. This would imply that $P_{\vee}(\mathbf{y}\mathbf{x}^*)^\Gamma P_{\vee} = P_{\vee}(\mathbf{x}\mathbf{y}^*)^\Gamma P_{\vee}$ and $W = P_{\vee}(\mathbf{x}\mathbf{x}^*)^\Gamma P_{\vee} + P_{\vee}(\mathbf{y}\mathbf{y}^*)^\Gamma P_{\vee}$, as desired.
    
    To see that the entries of $W$ are indeed all real, let $[A]_{k,\ell;m,n}$ denote the $(m,n)$-entry of the $(k,\ell)$-block of a matrix $A$. Since $\mathbf{v} \in \Sym{d}$, we can write $\mathbf{v} = \sum_j \mathbf{w_j} \otimes \mathbf{w_j}$ for some $\{\mathbf{w_j}\} \subset \mathbb{C}^d$. Then
    \begin{align*}
        (\mathbf{v}\mathbf{v}^*)^\Gamma & = \sum_{i,j=1} (\mathbf{w_i}\mathbf{w_j}^*) \otimes \overline{(\mathbf{w_j}\mathbf{w_i}^*)}, \quad \text{so}\\
        \big[(\mathbf{v}\mathbf{v}^*)^\Gamma\big]_{k,\ell;m,n} & = \sum_{i,j=1} w_{i,k} \overline{w_{j,\ell}w_{j,m}}w_{i,n},
    \end{align*}
    where $w_{i,k}$ represents the $k$-th entry of $\mathbf{w_i}$. Then
    \begin{align*}
        & \big[P_{\vee}(\mathbf{v}\mathbf{v}^*)^\Gamma P_{\vee}\big]_{k,\ell;m,n} \\
        & \qquad = \frac{1}{4} \sum_{i,j=1} \left(
        w_{i,k} \overline{w_{j,\ell}w_{j,m}}w_{i,n} +  
        w_{i,m} \overline{w_{j,\ell}w_{j,k}}w_{i,n} +
         w_{i,m} \overline{w_{j,n}w_{j,k}}w_{i,\ell} +
          w_{i,k} \overline{w_{j,n}w_{j,m}}w_{i,\ell}\right), \\
         & \qquad = \frac{1}{2} \sum_{i,j=1} \left(
        \mathrm{Re}(w_{i,k} \overline{w_{j,\ell}w_{j,m}}w_{i,n}) +  
        \mathrm{Re}(w_{i,m} \overline{w_{j,\ell}w_{j,k}}w_{i,n})\right),
    \end{align*}
    which is clearly real for all $k$, $\ell$, $m$, and $n$.
\end{proof}

Lemma~\ref{lem:real_extreme} is at least a little bit surprising---analogous versions of it are not true if the transpose is replaced by another positive linear map, it is not true without the symmetric projections, and it is not true if $\mathbf{v} \in \Sym{d}$ is replaced by $\mathbf{v} \in \mathbb{C}^d \otimes \mathbb{C}^d$.

In order to begin understanding what the possible eigenvalues of decomposable entanglement witnesses are, we first describe the eigenvalues of the extreme rays described by Lemma~\ref{lem:real_extreme}. To this end, we note that if $\mathbf{v} \in \RSym{d}$ then there exist scalars $\{\alpha_j\} \subset \mathbb{R}$ and an orthonormal basis $\{\mathbf{w_j}\} \subset \mathbb{R}^d$ such that
\begin{align}\label{eq:spectral_decomp_real_sym}
    \mathbf{v} = \sum_{j=1}^d \alpha_j \mathbf{w_j} \otimes \mathbf{w_j}.
\end{align}
Indeed, such a decomposition can be found simply by applying the (real) spectral decomposition to the ``matricization'' of $\mathbf{v}$ (i.e., the matrix that is obtained by placing the entries of $\mathbf{v}$ row-by-row into a $d \times d$ matrix). Indeed, we can decompose this matricization as $\mathrm{mat}(\mathbf{v}) = \sum_{j=1}^d \alpha_j \mathbf{w_j}\mathbf{w_j}^*$, and unravelling it back into a $d^2$-dimensional vector gives us exactly Equation~\eqref{eq:spectral_decomp_real_sym}, which we simply refer to as the spectral decomposition of $\mathbf{v}$ from now on.

With this decomposition in hand, we can now describe explicitly the $\binom{d+1}{2}$ potentially non-zero eigenvalues of the matrix $P_{\vee}(\mathbf{v}\mathbf{v}^*)^\Gamma P_{\vee}$ when $\mathbf{v} \in \RSym{d}$:

\begin{lemma}\label{lem:real_eigenvalues}
    Suppose $\mathbf{v} \in \RSym{d}$ has spectral decomposition as in Equation~\eqref{eq:spectral_decomp_real_sym}. Then the matrix $P_{\vee}(\mathbf{v}\mathbf{v}^*)^\Gamma P_{\vee}$ has eigenvalues
	\begin{align*}
    	\alpha_i\alpha_j & \quad \text{for} \quad 1 \leq i \leq j \leq d,
	\end{align*}
	together with $\binom{d}{2}$ eigenvalues equal to $0$.
\end{lemma}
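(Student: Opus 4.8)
The plan is to diagonalize $M := P_{\vee}(\mathbf{v}\mathbf{v}^*)^\Gamma P_{\vee}$ directly in the orthonormal basis of $\mathbb{C}^d \otimes \mathbb{C}^d$ obtained by tensoring the real orthonormal basis $\{\mathbf{w_j}\}$ from Equation~\eqref{eq:spectral_decomp_real_sym} with itself. The key preliminary observation is that, because each $\mathbf{w_j}$ is real and $\{\mathbf{w_j}\}$ is orthonormal, the matrix $U = [\mathbf{w_1} \mid \cdots \mid \mathbf{w_d}]$ is orthogonal, and the swap operator $S$ (acting by $S(\mathbf{a} \otimes \mathbf{b}) = \mathbf{b} \otimes \mathbf{a}$) commutes with $U \otimes U$. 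Since $P_{\vee} = \tfrac{1}{2}(I + S)$, it follows that $P_{\vee}$ acts on the rotated product vectors exactly as it does on the standard basis, namely $P_{\vee}(\mathbf{w_i} \otimes \mathbf{w_j}) = \tfrac{1}{2}(\mathbf{w_i} \otimes \mathbf{w_j} + \mathbf{w_j} \otimes \mathbf{w_i})$. Consequently the symmetric subspace is spanned by the orthonormal vectors $\mathbf{w_i} \otimes \mathbf{w_i}$ (for $1 \le i \le d$) together with $\tfrac{1}{\sqrt{2}}(\mathbf{w_i} \otimes \mathbf{w_j} + \mathbf{w_j} \otimes \mathbf{w_i})$ (for $i < j$), while the antisymmetric vectors $\tfrac{1}{\sqrt{2}}(\mathbf{w_i} \otimes \mathbf{w_j} - \mathbf{w_j} \otimes \mathbf{w_i})$ (for $i < j$) span its orthogonal complement.

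Next I would compute the action of $(\mathbf{v}\mathbf{v}^*)^\Gamma$ in this basis. Substituting the spectral decomposition gives $\mathbf{v}\mathbf{v}^* = \sum_{i,j} \alpha_i \alpha_j (\mathbf{w_i}\mathbf{w_j}^*) \otimes (\mathbf{w_i}\mathbf{w_j}^*)$, and since the $\mathbf{w_j}$ are real the partial transpose on the second factor yields
\[
    (\mathbf{v}\mathbf{v}^*)^\Gamma = \sum_{i,j} \alpha_i \alpha_j (\mathbf{w_i} \otimes \mathbf{w_j})(\mathbf{w_j} \otimes \mathbf{w_i})^*.
\]
By orthonormality this operator sends $\mathbf{w_k} \otimes \mathbf{w_\ell} \mapsto \alpha_k \alpha_\ell (\mathbf{w_\ell} \otimes \mathbf{w_k})$; that is, on each block $\mathrm{span}\{\mathbf{w_i} \otimes \mathbf{w_j}, \mathbf{w_j} \otimes \mathbf{w_i}\}$ with $i<j$ it acts as $\alpha_i\alpha_j$ times a swap, and on each line $\mathrm{span}\{\mathbf{w_i} \otimes \mathbf{w_i}\}$ it acts as multiplication by $\alpha_i^2$.

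The argument then finishes by applying the two projections. Each antisymmetric basis vector is annihilated by the right-hand $P_{\vee}$, producing the $\binom{d}{2}$ zero eigenvalues. Each symmetric basis vector is fixed by $P_{\vee}$ and is an eigenvector of $(\mathbf{v}\mathbf{v}^*)^\Gamma$: for $i < j$ the vector $\tfrac{1}{\sqrt 2}(\mathbf{w_i}\otimes\mathbf{w_j} + \mathbf{w_j}\otimes\mathbf{w_i})$ has eigenvalue $\alpha_i\alpha_j$, and $\mathbf{w_i}\otimes\mathbf{w_i}$ has eigenvalue $\alpha_i^2 = \alpha_i\alpha_i$, so $M$ is diagonal in this orthonormal basis with exactly the eigenvalues $\alpha_i\alpha_j$ for $1 \le i \le j \le d$ together with $\binom{d}{2}$ zeros. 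I expect the only genuinely delicate point to be the first step—verifying that $P_{\vee}$ respects the rotated basis; everything after that is a short eigenvector check. An equivalent way to package the proof, which sidesteps that verification, is to note that $\{\mathbf{w_i}\otimes\mathbf{w_j}\}$ is related to the standard basis by the orthogonal similarity $U \otimes U$, which preserves the symmetric subspace and (since $U$ is real) commutes with $\Gamma$; writing $\mathbf{v}' = \sum_j \alpha_j \mathbf{e_j}\otimes\mathbf{e_j}$ one gets $(U\otimes U)^* M (U \otimes U) = P_{\vee} (\mathbf{v}'\mathbf{v}'^*)^\Gamma P_{\vee}$, reducing the whole statement to the transparent diagonal case $\mathbf{v} = \mathbf{v}'$.
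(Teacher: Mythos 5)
Your proposal is correct and follows essentially the same route as the paper's proof: both identify the symmetrized vectors $\mathbf{w_i}\otimes\mathbf{w_j}+\mathbf{w_j}\otimes\mathbf{w_i}$ (for $1\le i\le j\le d$) as eigenvectors of $P_{\vee}(\mathbf{v}\mathbf{v}^*)^\Gamma P_{\vee}$ with eigenvalues $\alpha_i\alpha_j$, and attribute the remaining $\binom{d}{2}$ zero eigenvalues to the antisymmetric complement of $\Sym{d}$. Your extra care in verifying that the rotated symmetric/antisymmetric basis is complete (and your alternative packaging via the real orthogonal similarity $U\otimes U$, which commutes with both $P_{\vee}$ and $\Gamma$) is a nice tightening, but not a different argument.
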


\begin{proof}
	After using the spectral decomposition~\eqref{eq:spectral_decomp_real_sym}, straightforward computation shows that
	\[
		P_{\vee}(\mathbf{v}\mathbf{v}^*)^\Gamma P_{\vee} = \sum_{i,j=1}^d \alpha_i \alpha_j P_{\vee}(\mathbf{w_i}\mathbf{w_j}^* \otimes \mathbf{w_j}\mathbf{w_i}^*)P_{\vee}.
	\]
	To find the eigenvalues, we first define the following unnnormalized (eigen)vectors:
	\begin{align*}
		\mathbf{x_{i,j}} & := \mathbf{w_i} \otimes \mathbf{w_j} + \mathbf{w_j} \otimes \mathbf{w_i} \quad \text{for} \quad 1 \leq i \leq j \leq d.
	\end{align*}
	Direct computation then shows that
	\begin{align*}
		\big(P_{\vee}(\mathbf{v}\mathbf{v}^*)^\Gamma P_{\vee}\big)\mathbf{x_{i,j}} = \alpha_i\alpha_j \mathbf{x_{i,j}},
	\end{align*}
	so $\alpha_i\alpha_j$ is an eigenvalue of $P_{\vee}(\mathbf{v}\mathbf{v}^*)^\Gamma P_{\vee}$ associated with the eigenvector $\mathbf{x_{i,j}}$. The other $\binom{d}{2}$ eigenvalues equal to $0$ come from the $\binom{d}{2}$-dimensional (antisymmetric) subspace orthogonal to $\Sym{d}$.
\end{proof}

We should emphasize that the eigenvalues $\alpha_i\alpha_j$ described in Lemma~\ref{lem:real_eigenvalues} can be either positive or negative: the $\alpha_j$'s are eigenvalues of the real symmetric (but not necessarily positive semidefinite) matrix $\mathrm{mat}(\mathbf{v})$, so they can have either sign. This is unlike the coefficients in the more common Schmidt decomposition of vectors in $\mathbb{C}^{d_1} \otimes \mathbb{C}^{d_2}$ and Takagi decomposition of vectors in $\Sym{d}$. In the non-symmetric setting, the result analogous to Lemma~\ref{lem:real_eigenvalues} instead uses the vector's Schmidt coefficients \cite[Lemma~1]{JP18}.

Lemma~\ref{lem:real_eigenvalues} can be used to obtain a bound on the number of negative eigenvalues of decomposable symmetric entanglement witnesses of the form $P_{\vee}(\mathbf{v}\mathbf{v}^*)^\Gamma P_{\vee}$ that is tighter than the general bound of $\binom{d}{2} = d(d-1)/2$ that applies to all symmetric entanglement witnesses \cite{LJ20}:

\begin{corollary}\label{cor:num_neg_eigs_from_real_sym_v}
	If $\mathbf{v} \in \RSym{d}$ then the matrix $P_{\vee}(\mathbf{v}\mathbf{v}^*)^\Gamma P_{\vee}$ has at most $\lfloor d^2/4 \rfloor$ negative eigenvalues.
\end{corollary}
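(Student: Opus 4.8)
The plan is to read the eigenvalues straight off Lemma~\ref{lem:real_eigenvalues} and then reduce the entire problem to an elementary count over the sign pattern of the $\alpha_j$'s. By Lemma~\ref{lem:real_eigenvalues}, writing $\mathbf{v}$ in its spectral decomposition~\eqref{eq:spectral_decomp_real_sym}, the matrix $P_{\vee}(\mathbf{v}\mathbf{v}^*)^\Gamma P_{\vee}$ has eigenvalues $\alpha_i\alpha_j$ for $1 \leq i \leq j \leq d$ together with $\binom{d}{2}$ eigenvalues equal to $0$, where $\alpha_1,\ldots,\alpha_d$ are the eigenvalues of the real symmetric matrix $\mathrm{mat}(\mathbf{v})$. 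Since the $\binom{d}{2}$ zero eigenvalues are non-negative, it suffices to bound the number of negative terms among the products $\alpha_i\alpha_j$ with $i \leq j$.

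First I would observe that a product $\alpha_i\alpha_j$ with $i \leq j$ is negative precisely when $\alpha_i$ and $\alpha_j$ have strictly opposite signs; in particular this forces $i < j$, since the diagonal terms $\alpha_i^2$ are never negative. Writing $p$ for the number of indices with $\alpha_j > 0$ and $n$ for the number with $\alpha_j < 0$, the number of such mixed-sign (unordered) pairs is therefore exactly $pn$.

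It then remains to maximize $pn$ subject to the constraint $p + n \leq d$ with $p,n \geq 0$ integers, the slack coming from the fact that $\mathrm{mat}(\mathbf{v})$ may also have zero eigenvalues. A standard balancing argument shows the maximum is attained when $p + n = d$ with $p$ and $n$ as equal as possible, that is, $\{p,n\} = \{\lfloor d/2 \rfloor, \lceil d/2 \rceil\}$, giving $pn = \lfloor d/2 \rfloor \lceil d/2 \rceil = \lfloor d^2/4 \rfloor$. Combining this with the previous paragraph yields that $P_{\vee}(\mathbf{v}\mathbf{v}^*)^\Gamma P_{\vee}$ has at most $\lfloor d^2/4 \rfloor$ negative eigenvalues, as claimed.

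I expect no serious obstacle: the whole argument is a direct consequence of Lemma~\ref{lem:real_eigenvalues}, and the only mildly technical points are noticing that the diagonal products cannot be negative and verifying the identity $\lfloor d/2 \rfloor \lceil d/2 \rceil = \lfloor d^2/4 \rfloor$, which follows by treating even $d$ and odd $d$ separately.
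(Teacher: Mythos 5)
Your proposal is correct and follows essentially the same route as the paper's own proof: invoke Lemma~\ref{lem:real_eigenvalues}, count the mixed-sign products $\alpha_i\alpha_j$, and maximize that count over sign patterns to get $\lfloor d/2\rfloor\lceil d/2\rceil = \lfloor d^2/4\rfloor$. The only (minor) difference is that you track zero coefficients explicitly via the constraint $p+n \leq d$, whereas the paper states the count as exactly $k(d-k)$ when $k$ coefficients are negative---your version is marginally more careful, but the argument is the same.
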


\begin{proof}
    If $k$ of the $\{\alpha_j\}$ coefficients from $\mathbf{v}$'s spectral decomposition~\eqref{eq:spectral_decomp_real_sym} are negative, then exactly $k(d-k)$ of the products in the set $\{\alpha_i\alpha_j\}_{i\leq j}$ (i.e., eigenvalues of $P_{\vee}(\mathbf{v}\mathbf{v}^*)^\Gamma P_{\vee}$, by Lemma~\ref{lem:real_eigenvalues}) are negative. Straightforward calculus shows that this quantity is maximized when $k=d/2$ (or as close to it as possible, if $d$ is odd), so $k(d-k) = \lfloor d/2 \rfloor\lceil d/2 \rceil = \lfloor d^2/4 \rfloor$.
\end{proof}

The bound of Corollary~\ref{cor:num_neg_eigs_from_real_sym_v} is tight, but if we relax any of its hypotheses then more negative eigenvalues can be attained. For example, in the $d = 3$ case Corollary~\ref{cor:num_neg_eigs_from_real_sym_v} tells us that a matrix of the form $P_{\vee}(\mathbf{v}\mathbf{v}^*)^\Gamma P_{\vee}$ (where $\mathbf{v} \in \RSym{d}$) has at most $2$ negative eigenvalues. However, it is possible to construct witnesses of this form with $3$ negative eigenvalues (which is necessarily maximal, by the general $\binom{d}{2}$ bound) if we relax any of the hypotheses on $\mathbf{v}$:

\begin{itemize}
    \item If $\mathbf{v} = (0, 1, i, 1, 0, i, i, i, 1) \in \Sym{3}$ then $P_{\vee}(\mathbf{v}\mathbf{v}^*)^\Gamma P_{\vee}$ has $3$ negative eigenvalues equal to $-1$.
    
    \item If $\mathbf{v} = (0, 0, 1, 1, 0, 0, 0, 1, 0) \in \mathbb{R}^3 \otimes \mathbb{R}^3$ then $P_{\vee}(\mathbf{v}\mathbf{v}^*)^\Gamma P_{\vee}$ has $3$ negative eigenvalues equal to $-1/2$.
\end{itemize}

When $d \geq 4$, it becomes less clear how many negative eigenvalues witnesses of the form $P_{\vee}(\mathbf{v}\mathbf{v}^*)^\Gamma P_{\vee}$ can have when we relax the restrictions on $\mathbf{v}$. We have randomly generated a few million vectors of different sizes and thus numerically found vectors $\mathbf{v} \in \mathbb{C}^d \otimes \mathbb{C}^d$ such that, for $d = 2, 3, 4, \ldots$, $P_{\vee}(\mathbf{v}\mathbf{v}^*)^\Gamma P_{\vee}$ has $1$, $3$, $5$, $8$, $12$, $16$, $21$, $26$, $30$, $37$, $\ldots$ negative eigenvalues (in fact, we have found these same quantities of negative eigenvalues even just with $\mathbf{v} \in \mathbb{R}^d \otimes \mathbb{R}^d$ or $\mathbf{v} \in \Sym{d}$). However, these numbers just give lower bounds on how many negative eigenvalues $P_{\vee}(\mathbf{v}\mathbf{v}^*)^\Gamma P_{\vee}$ can have. The general $\binom{d}{2} = d(d-1)/2$ upper bound of \cite{LJ20} applies in this setting, and Figure~\ref{fig:dimensional_eigenvalue_comparisons} compares these various lower and upper bounds.

\begin{figure}[!htb]
    \centering
    \begin{tikzpicture}[scale=0.08]
        \foreach \x in {1,2,...,11} {
            \draw[ultra thin,gray] (10*\x, -2) -- (10*\x, 57);
            \node[anchor=north] at (10*\x, -1) {$\x$};
            \draw[thick] (10*\x, -0.5) -- (10*\x, 0.5);
        }
        \foreach \y in {5,10,...,55} {
            \draw[ultra thin,gray] (5, \y) -- (115, \y);
        }
        \foreach \y in {10,20,...,50} {
            \node[anchor=east] at (5, \y) {$\y$};
        }
        \draw[thick] (115, 0) -- coordinate (x axis mid) (5,0);
        \node[anchor=east] at (5,0) {$0$};
        \node[below=0.8cm] at (x axis mid) {Dimension ($d$)};
        \node[rotate=90, above=0.8cm] at (5, 27) {Number of Negative Eigenvalues};

        \node[anchor=north west, shift={(-0.4,0.3)},text opacity=1,fill opacity=0.75,fill=white,inner sep=0.2] at (10*9, 20) {\begin{tabular}{c} $g(d) = \lfloor d^2/4 \rfloor$ \\ Maximum when $X = \mathbf{v}\mathbf{v}^*$ \\ with $\mathbf{v} \in \RSym{d}$ (Corollary~\ref{cor:num_neg_eigs_from_real_sym_v}) \end{tabular}};
        
        \coordinate (cur) at (10*1,0);
        \foreach \x/\y in {1/0, 2/1, 3/2, 4/4, 5/6, 6/9, 7/12, 8/16, 9/20, 10/25, 11/30} {
            \draw (cur) -- (10*\x, \y);
            \coordinate (cur) at (10*\x, \y);
            \node[diamond, fill, inner sep=1pt] (cur) at (10*\x, \y) {};
        }
        \coordinate (cur) at (10*1,0);
        \foreach \x/\y in {1/0, 2/1, 3/3, 4/6, 5/10, 6/15, 7/21, 8/28, 9/36, 10/45, 11/55} {
            \draw (cur) -- (10*\x, \y);
            \coordinate (cur) at (10*\x, \y);
            \node[circle, fill, inner sep=1pt] (cur) at (10*\x, \y) {};
        }
        \node[anchor=south east, shift={(1.2,0.95)},text opacity=1,fill opacity=0.75,fill=white,inner sep=0.2] at (10*7, 21) {\begin{tabular}{c} Maximum over all \\ PSD $X$ (Theorem~\ref{thm:dsew_max_neg_evals}) \\ $f(d) = d(d-1)/2$ \end{tabular}};
        \coordinate (cur) at (10*1,0);
        \foreach \x/\y in {1/0, 2/1, 3/3, 4/5, 5/8, 6/12, 7/16, 8/21, 9/26, 10/30, 11/37} {
            \draw[dashed] (cur) -- (10*\x, \y);
            \coordinate (cur) at (10*\x, \y);
            \node[circle, draw, inner sep=1pt] (cur) at (10*\x, \y) {};
        }
        \node[anchor=west,shift={(0,0.3)}] at (10*11, 37) {\begin{tabular}{c} Numerical maximum when \\ $X = \mathbf{v}\mathbf{v}^*$ with $\mathbf{v} \in \mathbb{C}^d \otimes \mathbb{C}^d$ \end{tabular}};
    \end{tikzpicture}

    \caption{The maximum number of negative eigenvalues of decomposable symmetric entanglement witnesses (i.e., operators of the form $P_{\vee} X^\Gamma P_{\vee} + Y$) of various forms, by dimension.}\label{fig:dimensional_eigenvalue_comparisons}
\end{figure}
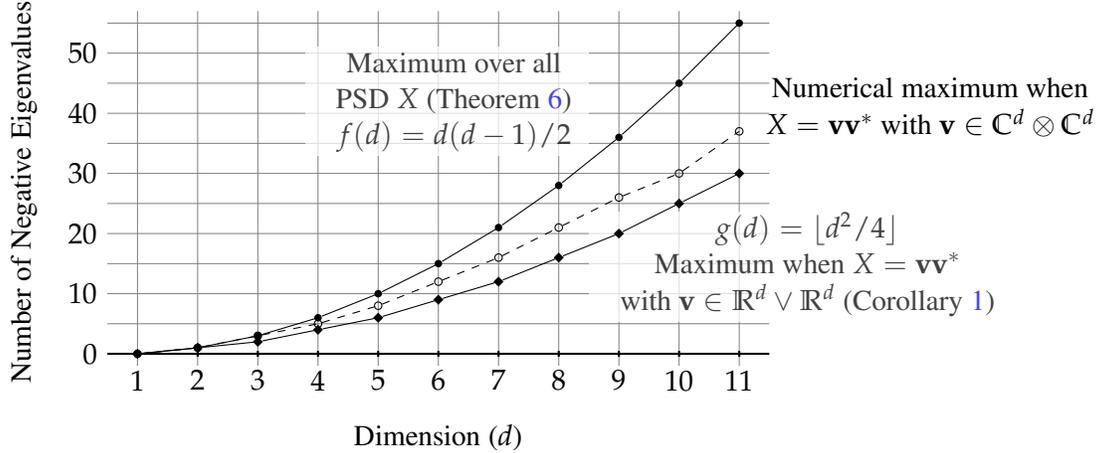

We will show in Theorem~\ref{thm:dsew_max_neg_evals} that matrices of the form $P_{\vee} X^\Gamma P_{\vee}$, where $X \in \mathcal{L}(\mathbb{C}^d \otimes \mathbb{C}^d)$ is positive semidefinite but not necessarily of the form $X = \mathbf{v}\mathbf{v}^*$, saturate the upper bound---they can have up to $d(d-1)/2$ negative eigenvalues.

%

\section{Absolute Symmetric Separability}\label{sec:abs_sep}

We now consider the problem of characterizing the following set of symmetric quantum states:

\begin{definition}\label{defn:sym_abs_sep}
    We say that $\rho \in \mathcal{L}(\Sym{d})$ is \emph{absolutely symmetric separable} if $U\rho U^*$ is separable for all unitaries $U \in \mathcal{L}(\Sym{d})$.
\end{definition}

By the spectral decomposition, the above definition really is a spectral one: $\rho \in \mathcal{L}(\Sym{d})$ is absolutely symmetric separable if and only if every symmetric quantum state $\sigma \in \mathcal{L}(\Sym{d})$ with the same spectrum as $\rho$ is separable. The non-symmetric version of this problem has been studied extensively \cite{KZ00,Joh13}, but essentially nothing is known about it beyond what the partial transpose tells us \cite{Hil07,AJR15}.

With that in mind, we consider the following (presumably) easier-to-characterize set of symmetric states:

\begin{definition}\label{defn:sym_abs_ppt}
    We say that $\rho \in \mathcal{L}(\Sym{d})$ is \emph{absolutely symmetric PPT} if $P_{\vee}(U\rho U^*)^\Gamma P_{\vee}$ is positive semidefinite for all unitaries $U \in \mathcal{L}(\Sym{d})$.
\end{definition}

Since every separable state has positive partial transpose, every absolutely symmetric separable state is absolutely symmetric PPT. In fact, every absolutely symmetric separable state satisfies the seemingly stronger requirement that $(U\rho U^*)^\Gamma$ is positive semidefinite for all unitaries $U \in \mathcal{L}(\Sym{d})$. Much like we could have altered Definition~\ref{defn:dec_sew} to not require $P_{\vee}XP_{\vee} = X$, we could alter Definition~\ref{defn:sym_abs_ppt} to require that $(U\rho U^*)^\Gamma$ (not just $P_{\vee}(U\rho U^*)^\Gamma P_{\vee}$) is positive semidefinite for all unitaries $U \in \mathcal{L}(\Sym{d})$. However, we again prefer the current definition for several reasons:

\begin{itemize}
    \item[1)] The version of Definition~\ref{defn:sym_abs_ppt} that we used agrees with the philosophy of considering operators only as they act on $\mathcal{L}(\Sym{d})$: whether or not $P_{\vee}(U\rho U^*)^\Gamma P_{\vee}$ is positive semidefinite depends only on how it acts on symmetric vectors, whereas positive semidefiniteness of $(U\rho U^*)^\Gamma$ depends on its action on non-symmetric vectors as well.
    
    \item[2)] Positive semidefiniteness of $P_{\vee}(U\rho U^*)^\Gamma P_{\vee}$ is implied by positive semidefiniteness of $(U\rho U^*)^\Gamma$, so a characterization of the quantum states of Definition~\ref{defn:sym_abs_ppt} automatically implies bounds on states with the property that $(U\rho U^*)^\Gamma$ is positive semidefinite for all unitaries $U \in \mathcal{L}(\Sym{d})$. We establish such a characterization in the upcoming Theorem~\ref{thm:asymppt_from_eigenvalues}.
    
    \item[3)] The distinction between these two sets (at least in the ``absolute'' setting where we use \emph{all} unitaries $U \in \mathcal{L}(\Sym{d})$) does not seem to matter, at least in the $d = 2$ case (see the upcoming Theorem~\ref{thm:asymppt_from_eigenvalues_2dim}).
\end{itemize}

The set of (not-necessarily-symmetric) absolutely PPT states were characterized completely in \cite{Hil07}, and the main result of this section provides an analogous characterization in the symmetric case. In order to state this characterization succinctly, we first define a \emph{upper-triangular matricization} of a vector $\mathbf{x} \in \mathbb{R}^{d(d+1)/2}$ to be any upper-triangular matrix whose upper-triangular entries are the entries of $\mathbf{x}$, in some order. For example, if $d = 2$ and $\mathbf{x} = (1, 3, 5)$ then
\begin{align}\label{eq:triangular_matricization}
    \begin{bmatrix}
        1 & 3 \\ 0 & 5
    \end{bmatrix}, \quad \begin{bmatrix}
        1 & 5 \\ 0 & 3
    \end{bmatrix}, \quad \text{and} \quad \begin{bmatrix}
        3 & 1 \\ 0 & 5
    \end{bmatrix}
\end{align}
are all upper-triangular matricizations of $\mathbf{x}$. We similarly define a \emph{symmetric matricization} of $\mathbf{x} \in \mathbb{R}^{d(d+1)/2}$ to be any matrix of the form $X + X^T$, where $X$ is an upper-triangular matricization of $\mathbf{x}$. For example, if $d = 2$ and $\mathbf{x} = (1, 3, 5)$ then
\[
    \begin{bmatrix}
        2 & 3 \\ 3 & 10
    \end{bmatrix}, \quad \begin{bmatrix}
        2 & 5 \\ 5 & 6
    \end{bmatrix}, \quad \text{and} \quad \begin{bmatrix}
        6 & 1 \\ 1 & 10
    \end{bmatrix}
\]
are the symmetric matricizations of $\mathbf{x}$ corresponding to the upper-triangular matricizations~\eqref{eq:triangular_matricization}.

\begin{theorem}\label{thm:asymppt_from_eigenvalues}
    Suppose $\rho \in \mathcal{L}(\Sym{d})$ is a symmetric quantum state, and let $\bm{\lambda} = (\lambda_1,\lambda_2,\ldots,\lambda_{d(d+1)/2})$ be a vector containing its $\binom{d+1}{2} = d(d+1)/2$ eigenvalues. Then $\rho$ is absolutely symmetric PPT if and only if every symmetric matricization of $\bm{\lambda}$ is positive semidefinite.
\end{theorem}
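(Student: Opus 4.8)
The plan is to convert the universally-quantified condition defining absolute symmetric PPT into a single scalar minimization and then show that this minimum is governed exactly by the least eigenvalues of the symmetric matricizations of $\bm{\lambda}$. First I would note that $\rho$ is absolutely symmetric PPT if and only if $\mathbf{w}^* \sigma^\Gamma \mathbf{w} \geq 0$ for every unit vector $\mathbf{w} \in \Sym{d}$ and every $\sigma = U\rho U^*$; as $U$ ranges over the unitaries in $\mathcal{L}(\Sym{d})$, $\sigma$ ranges over exactly the symmetric states with spectrum $\bm{\lambda}$. Because $\mathbf{w}$ is symmetric, $\mathbf{w}^*\sigma^\Gamma\mathbf{w} = \mathbf{w}^* P_\vee \sigma^\Gamma P_\vee \mathbf{w} = \tr\big(\sigma\, P_\vee(\mathbf{w}\mathbf{w}^*)^\Gamma P_\vee\big)$, so the whole problem reduces to showing that
\[
    m(\bm{\lambda}) \defeq \min_{\sigma,\, \mathbf{w}} \tr\big(\sigma\, P_\vee(\mathbf{w}\mathbf{w}^*)^\Gamma P_\vee\big) \ge 0
\]
holds precisely when every symmetric matricization of $\bm{\lambda}$ is positive semidefinite, where the minimum is over unit $\mathbf{w} \in \Sym{d}$ and states $\sigma$ with spectrum $\bm{\lambda}$.

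The key reduction is that the minimizing $\mathbf{w}$ may be taken real. Writing $\sigma = \sum_k \lambda_k \mathbf{z_k}\mathbf{z_k}^*$ in its spectral decomposition, Lemma~\ref{lem:real_extreme} shows that each $P_\vee(\mathbf{z_k}\mathbf{z_k}^*)^\Gamma P_\vee$ has only real entries, so $P_\vee \sigma^\Gamma P_\vee = \sum_k \lambda_k P_\vee(\mathbf{z_k}\mathbf{z_k}^*)^\Gamma P_\vee$ is a real symmetric operator that preserves the real form $\RSym{d}$ of the symmetric subspace. Consequently its smallest eigenvalue on $\Sym{d}$ is attained at an eigenvector lying in $\RSym{d}$, which means that for the purpose of computing $m(\bm{\lambda})$ it suffices to range over real $\mathbf{w} \in \RSym{d}$.

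Next I would fix such a real $\mathbf{w}$ and use its spectral decomposition $\mathbf{w} = \sum_{j} \alpha_j \mathbf{w_j}\otimes\mathbf{w_j}$ from Equation~\eqref{eq:spectral_decomp_real_sym}, normalized so that $\sum_j \alpha_j^2 = 1$. By Lemma~\ref{lem:real_eigenvalues}, the operator $P_\vee(\mathbf{w}\mathbf{w}^*)^\Gamma P_\vee$ has eigenvalues exactly $\{\alpha_i\alpha_j : 1 \le i \le j \le d\}$ on the symmetric subspace (together with irrelevant zeros on its orthogonal complement). Minimizing $\tr(\sigma\,\cdot\,)$ over all $\sigma$ with spectrum $\bm{\lambda}$ is then a pure rearrangement problem: by von Neumann's trace inequality \cite{HJ90} the minimum equals $\min_\pi \sum_{i \le j} \lambda_{\pi(i,j)}\,\alpha_i\alpha_j$, where $\pi$ runs over all bijections from the index set $\{(i,j): i \le j\}$ to the eigenvalues $\{\lambda_1,\ldots,\lambda_{d(d+1)/2}\}$. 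The crucial observation is that for each fixed $\pi$,
\[
    \sum_{i \le j}\lambda_{\pi(i,j)}\,\alpha_i\alpha_j = \tfrac{1}{2}\,\bm{\alpha}^T S_\pi\, \bm{\alpha},
\]
where $S_\pi$ is precisely the symmetric matricization of $\bm{\lambda}$ that places $\lambda_{\pi(i,j)}$ in position $(i,j)$; the factor $\tfrac12$ comes from the doubled diagonal in the definition of a symmetric matricization.

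Assembling these pieces, and using that $\bm{\alpha}$ ranges over all unit vectors in $\mathbb{R}^d$ as $\mathbf{w}$ ranges over unit vectors in $\RSym{d}$, I can interchange the two minimizations to obtain
\[
    m(\bm{\lambda}) = \tfrac{1}{2}\min_{\pi}\ \min_{\|\bm{\alpha}\| = 1} \bm{\alpha}^T S_\pi \bm{\alpha} = \tfrac{1}{2}\min_{\pi} \lambda_{\min}(S_\pi).
\]
Since every symmetric matricization of $\bm{\lambda}$ arises as some $S_\pi$, this is non-negative if and only if $S_\pi \succeq 0$ for all $\pi$, settling both directions of the equivalence simultaneously. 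I expect the main obstacle to be the real-reduction step: it is what makes the eigenvalues of $P_\vee(\mathbf{w}\mathbf{w}^*)^\Gamma P_\vee$ collapse to the simple products $\alpha_i\alpha_j$ (for complex $\mathbf{w}$ these eigenvalues are genuinely more complicated and would not align with the real matricizations $S_\pi$), so Lemma~\ref{lem:real_extreme} is doing essential work. The rest is bookkeeping: confirming the quadratic-form identity with the correct factor of $\tfrac12$, and checking that the rearrangement minimum is in fact achieved by a state of the form $U\rho U^*$.
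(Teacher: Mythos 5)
Your proposal is correct and follows essentially the same route as the paper's proof: rewrite the PPT condition as $\tr\big(\sigma\, P_{\vee}(\mathbf{w}\mathbf{w}^*)^\Gamma P_{\vee}\big) \geq 0$, reduce to real $\mathbf{w} \in \RSym{d}$ via Lemma~\ref{lem:real_extreme}, invoke Lemma~\ref{lem:real_eigenvalues} to get the eigenvalues $\alpha_i\alpha_j$, apply the rearrangement/trace inequality over the unitary orbit, and identify the resulting quadratic forms with the symmetric matricizations. The only (immaterial) difference is how Lemma~\ref{lem:real_extreme} is deployed---you apply it to the eigenvectors of $\sigma$ to conclude $P_{\vee}\sigma^\Gamma P_{\vee}$ is real symmetric and hence has a real minimizing eigenvector, whereas the paper uses the lemma's splitting of a complex $\mathbf{v}$ into real and imaginary parts---and both arguments are valid.
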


\begin{proof}
    The state $\rho$ being absolutely symmetric PPT means that $P_{\vee}(U\rho U^*)^\Gamma P_{\vee}$ is positive semidefinite for all unitaries $U \in \mathcal{L}(\Sym{d})$, which is equivalent to
    \[
        \mathbf{v}^* (U \rho U^*)^\Gamma \mathbf{v} \geq 0
    \]
    for all unitaries $U \in \mathcal{L}(\Sym{d})$ and all $\mathbf{v} \in \Sym{d}$. Rearranging a bit shows that
    \begin{align*}
        \mathbf{v}^* (U \rho U^*)^\Gamma \mathbf{v} & = \tr\big( (U \rho U^*) (\mathbf{v}\mathbf{v}^*)^\Gamma \big) = \tr\big( (U \rho U^*) (P_{\vee}(\mathbf{v}\mathbf{v}^*)^\Gamma P_{\vee}) \big),
    \end{align*}
    where the last equality follows from the fact that $U \rho U^*$ is supported on the symmetric subspace. It follows that $\rho$ is absolutely symmetric PPT if and only if
    \begin{align}\label{eq:asymppt_trace}
        \tr\big( (U \rho U^*) (P_{\vee}(\mathbf{v}\mathbf{v}^*)^\Gamma P_{\vee}) \big) \geq 0
    \end{align}
    for all unitaries $U \in \mathcal{L}(\Sym{d})$ and all $\mathbf{v} \in \Sym{d}$. By Lemma~\ref{lem:real_extreme}, it furthermore suffices to require that $\mathbf{v} \in \RSym{d}$.
    
    If we denote the eigenvalues of $P_{\vee}(\mathbf{v}\mathbf{v}^*)^\Gamma P_{\vee}$ by $\mu_1 \geq \mu_2 \geq \cdots \geq \mu_{d(d+1)/2}$, then \cite[Problem~III.6.14]{Bha97} tells us that minimizing the value on the left-hand-side of Inequality~\eqref{eq:asymppt_trace} over all unitaries $U \in \mathcal{L}(\Sym{d})$ is equivalent to minimizing
    \begin{align}\label{eq:eig_sort_minimize}
        \sum_{j=1}^{d(d+1)/2} \lambda_{\pi(j)} \mu_{j}
    \end{align}
    over all permutations $\pi : \{1,2,\ldots, d(d+1)/2\} \rightarrow \{1,2,\ldots, d(d+1)/2\}$. Well, Lemma~\ref{lem:real_eigenvalues} tells us that the eigenvalues of $P_{\vee}(\mathbf{v}\mathbf{v}^*)^\Gamma P_{\vee}$ are all of the form $\alpha_i \alpha_j$ for some family of real numbers $\{\alpha_j\}_{j=1}^d \subset \mathbb{R}$. It follows that $\rho$ is absolutely symmetric PPT if and only if
    \begin{align}\label{eq:abs_sym_ppt_ineq}
        \sum_{i=1}^{d}\sum_{j=i}^d \alpha_i \alpha_j \lambda_{f(i,j)} \geq 0
    \end{align}
    for all bijections $f : \{(i,j) : 1 \leq i \leq j \leq d\} \rightarrow \{1, 2, \ldots, d(d+1)/2\}$ and all $\{\alpha_j\}_{j=1}^d \subset \R$. By grouping these numbers together into $d \times d$ matrices, we see that Inequality~\eqref{eq:abs_sym_ppt_ineq} is equivalent to
    \begin{align}\label{eq:abs_sym_ppt_ineq2}
        \bm{\alpha}^* \Lambda \bm{\alpha} \geq 0
    \end{align}
    for all symmetric matricizations $\Lambda$ of $\bm{\lambda}$ (the order in which the entries of $\bm{\lambda}$ are put into the matrix $\Lambda$ is determined by the bijection $f$), where $\bm{\alpha} = (\alpha_1,\alpha_2,\ldots,\alpha_d) \in \mathbb{R}^d$. Since Inequality~\eqref{eq:abs_sym_ppt_ineq2} holds for all vectors $\bm{\alpha} \in \mathbb{R}^d$, it is equivalent to positive semidefiniteness of each $\Lambda$. That is, $\rho$ is absolutely symmetric PPT if and only if all symmetric matricizations of $\bm{\lambda}$ are positive semidefinite, which is what we wanted to show.
\end{proof}

Since absolutely symmetric separable states are absolutely symmetric PPT, the characterization of Theorem~\ref{thm:asymppt_from_eigenvalues} provides (seemingly quite strong) necessary conditions for absolute symmetric separability. It also leads to some simple corollaries that would not otherwise be obvious. For example, every absolutely symmetric separable quantum state has full rank:

\begin{corollary}\label{cor:no_zero_eigenvalues}
    If a quantum state $\rho \in \mathcal{L}(\Sym{d})$ is absolutely symmetric separable (or even just absolutely symmetric PPT) then all $d(d+1)/2$ of its eigenvalues are strictly positive.
\end{corollary}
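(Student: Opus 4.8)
The plan is to prove the contrapositive via Theorem~\ref{thm:asymppt_from_eigenvalues}: assuming $\rho$ has a zero eigenvalue, I will exhibit a single symmetric matricization of its eigenvalue vector $\bm{\lambda}$ that fails to be positive semidefinite, which by that theorem shows that $\rho$ is not absolutely symmetric PPT (and hence, since every absolutely symmetric separable state is absolutely symmetric PPT, not absolutely symmetric separable either). The case $d = 1$ is trivial, since then $\rho$ is $1 \times 1$ with $\tr(\rho) = 1$, forcing its single eigenvalue to equal $1 > 0$; so I would assume $d \geq 2$ throughout.

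First I would record that, because $\rho$ is a quantum state, its $d(d+1)/2$ eigenvalues are non-negative and sum to $\tr(\rho) = 1$, so at least one of them is strictly positive; call it $p > 0$. Suppose, toward a contradiction, that some eigenvalue equals $0$. Note that $0$ and $p$ are then two distinct entries of $\bm{\lambda}$, and that when $d \geq 2$ the upper-triangular part of a $d \times d$ matrix has at least one off-diagonal slot available.

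The key step is to build the upper-triangular matricization $X$ deliberately: place the eigenvalue $0$ in the diagonal position $(1,1)$, place $p$ in the off-diagonal position $(1,2)$, and distribute the remaining $d(d+1)/2 - 2$ eigenvalues into the other upper-triangular positions in any order. The associated symmetric matricization $\Lambda = X + X^T$ then satisfies $\Lambda_{1,1} = 2\cdot 0 = 0$ and $\Lambda_{1,2} = \Lambda_{2,1} = p$, so its leading $2 \times 2$ principal submatrix has the form $\left[\begin{smallmatrix} 0 & p \\ p & q \end{smallmatrix}\right]$, whose determinant is $0 \cdot q - p^2 = -p^2 < 0$ regardless of the value $q$ placed in position $(2,2)$. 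Hence this principal submatrix is not positive semidefinite, and since every principal submatrix of a positive semidefinite matrix is itself positive semidefinite, $\Lambda$ is not positive semidefinite. By Theorem~\ref{thm:asymppt_from_eigenvalues} this means $\rho$ is not absolutely symmetric PPT, giving the desired contradiction.

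I expect no real obstacle here beyond the bookkeeping of the matricization, as Theorem~\ref{thm:asymppt_from_eigenvalues} does the heavy lifting by converting the analytic condition over all unitaries into the purely combinatorial requirement that \emph{every} symmetric matricization be positive semidefinite; producing a single bad matricization then suffices. The only points requiring care are verifying that the construction is legitimate (that $0$ and $p$ genuinely appear among the entries of $\bm{\lambda}$, and that $d \geq 2$ furnishes the needed off-diagonal position), all of which the argument above secures.
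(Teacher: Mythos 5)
Your proof is correct and takes essentially the same approach as the paper: both invoke Theorem~\ref{thm:asymppt_from_eigenvalues} and exhibit a symmetric matricization that places the zero eigenvalue in a diagonal slot with a strictly positive eigenvalue in the adjacent off-diagonal slot, so that the resulting $2 \times 2$ block cannot be positive semidefinite. The only cosmetic differences are that the paper phrases the conclusion as ``PSD would force $\lambda_1 = 0$, contradicting $\tr(\rho) = 1$'' rather than computing the negative determinant directly, and that you additionally handle the trivial $d = 1$ case explicitly.
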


\begin{proof}
    One of the symmetric matricizations of $\bm{\lambda}$ (the vector containing the eigenvalues of $\rho$) has top-left $2 \times 2$ block
    \[
        \begin{bmatrix}
            2\lambda_{d(d+1)/2} & \lambda_1 \\
            \lambda_1 & 2\lambda_2
        \end{bmatrix}.
    \]
    If any of the eigenvalues of $\rho$ equalled $0$ then we would have $\lambda_{d(d+1)/2} = 0$, so positive semidefiniteness of this symmetric matricization would imply $\lambda_1 = 0$, so $\bm{\lambda} = 0$, which contradicts the fact that $\tr(\rho) = \sum_j \lambda_j = 1$.
\end{proof}

For the sake of comparison, a similar result holds for non-symmetric absolute separability, but in that case it is possible to have a single eigenvalue that equals $0$ \cite[Proposition~1]{AJR15}.

\subsection{Low-Dimensional Simplifications}\label{sec:abs_sep_small_dim}

While using Theorem~\ref{thm:asymppt_from_eigenvalues} na\"{i}vely to check whether or not a state is absolutely symmetric PPT relies on checking positive semidefiniteness of all $(d(d+1)/2)!$ symmetric matricizations of its eigenvalues, that set of matrices can be reduced considerably by exploiting symmetries and ignoring cases that are ``obviously'' positive semidefinite.

For example, if $d = 2$ then there are $3! = 6$ symmetric matricizations, but we do not need to check positive semidefiniteness of them all: if we denote the eigenvalues by $\lambda_1 \geq \lambda_2 \geq \lambda_3 \geq 0$ then the $3! = 6$ symmetric matricizations are
\begin{align}\label{eq:lambda_2dim_sym}
    \Lambda_1 & := \begin{bmatrix}
        2\lambda_1 & \lambda_2 \\ \lambda_2 & 2\lambda_3
    \end{bmatrix}, & \Lambda_2 & := \begin{bmatrix}
        2\lambda_1 & \lambda_3 \\ \lambda_3 & 2\lambda_2
    \end{bmatrix}, & \Lambda_3 & := \begin{bmatrix}
        2\lambda_2 & \lambda_1 \\ \lambda_1 & 2\lambda_3
    \end{bmatrix},\nonumber\\
    \Lambda_4 & := \begin{bmatrix}
        2\lambda_2 & \lambda_3 \\ \lambda_3 & 2\lambda_1
    \end{bmatrix}, & \Lambda_5 & := \begin{bmatrix}
        2\lambda_3 & \lambda_1 \\ \lambda_1 & 2\lambda_2
    \end{bmatrix}, & \Lambda_6 & := \begin{bmatrix}
        2\lambda_3 & \lambda_2 \\ \lambda_2 & 2\lambda_1
    \end{bmatrix}.
\end{align}
Of these, $\Lambda_2$ and $\Lambda_4$ are necessarily positive semidefinite (since they are diagonally dominant), so we do not need to check them. Furthermore, permuting rows and columns of these matrices show that $\Lambda_1$ is positive semidefinite if and only if $\Lambda_6$ is, and $\Lambda_3$ is positive semidefinite if and only if $\Lambda_5$ is. Finally, positive semidefiniteness of $\Lambda_5$ is equivalent to $\lambda_1 \leq 2\sqrt{\lambda_2\lambda_3}$, which implies $\lambda_2 \leq 2\sqrt{\lambda_1\lambda_3}$, which is equivalent to positive semidefiniteness of $\Lambda_1$. Altogether, this shows that if $\Lambda_5$ is positive semidefinite then so is $\Lambda_j$ for each $1 \leq j \leq 6$.

The observation of the previous paragraph is stated more succinctly in the following theorem, which also establishes the fact that this same characterization works for absolutely symmetric separability (not just absolutely symmetric positive partial transposition).

\begin{theorem}\label{thm:asymppt_from_eigenvalues_2dim}
    Suppose $\rho \in \mathcal{L}(\Sym{2})$ has eigenvalues $\lambda_1 \geq \lambda_2 \geq \lambda_3 \geq 0$. The following are equivalent:
    
    \begin{enumerate}[(a)]
        \item $\rho$ is absolutely symmetric separable;
        
        \item $\rho$ is absolutely symmetric PPT; and
        
        \item $\lambda_1 \leq 2\sqrt{\lambda_2\lambda_3}$.
    \end{enumerate}
\end{theorem}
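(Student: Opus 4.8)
The three conditions are naturally attacked by proving the cycle (a)$\Rightarrow$(b)$\Rightarrow$(c)$\Rightarrow$(a). The implication (a)$\Rightarrow$(b) is immediate: if $U\rho U^*$ is separable then $(U\rho U^*)^\Gamma$ is positive semidefinite, and compressing a positive semidefinite matrix by $P_{\vee}$ keeps it positive semidefinite, so $\rho$ is absolutely symmetric PPT. The equivalence (b)$\Leftrightarrow$(c) is exactly what the six-matricization discussion preceding the theorem accomplishes: Theorem~\ref{thm:asymppt_from_eigenvalues} says that (b) holds if and only if all of $\Lambda_1,\dots,\Lambda_6$ in~\eqref{eq:lambda_2dim_sym} are positive semidefinite, and that discussion reduces this to positive semidefiniteness of $\Lambda_5$, i.e.\ to $\lambda_1\le 2\sqrt{\lambda_2\lambda_3}$. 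Thus the real content of the theorem is the implication (c)$\Rightarrow$(a), and that is where the plan is to spend essentially all of the effort.

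For (c)$\Rightarrow$(a) I would use the fact that $d=2$ places us in a $2\times2$ system, where positive partial transpose is equivalent to separability \cite{Sto63,Wor76,HHH96}. Hence it suffices to show that $\lambda_1\le 2\sqrt{\lambda_2\lambda_3}$ forces $(U\rho U^*)^\Gamma\ge 0$ on all of $\C^2\otimes\C^2$ (not merely its symmetric compression) for every unitary $U$. Writing $\sigma_U=U\rho U^*$ and using $\sigma_U=P_{\vee}\sigma_U P_{\vee}$, for any $\phi\in\C^2\otimes\C^2$ one has $\bra{\phi}\sigma_U^\Gamma\ket{\phi}=\tr(\sigma_U R_\phi)$, where $R_\phi:=P_{\vee}(\ketbra{\phi}{\phi})^\Gamma P_{\vee}$. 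Minimizing over $U$ exactly as in the proof of Theorem~\ref{thm:asymppt_from_eigenvalues} (the trace-minimization of \cite[Problem~III.6.14]{Bha97}) turns ``(a) via full PPT'' into the single family of scalar inequalities $\lambda_3\nu_1+\lambda_2\nu_2+\lambda_1\nu_3\ge 0$, required for all $\phi$, where $\nu_1\ge\nu_2\ge\nu_3$ are the eigenvalues of $R_\phi$. A clean preliminary observation is that $R_\phi$ has at most one negative eigenvalue: the partial transpose of a pure two-qubit state has at most one negative eigenvalue, no smaller than $-\tfrac12$, and Cauchy interlacing for the codimension-one compression by $P_{\vee}$ then forces $\nu_1,\nu_2\ge 0\ge\nu_3$.

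It then remains to compute $R_\phi$ and minimize the functional. Here the plan is to exploit the residual freedom $\phi\mapsto(O\otimes O)\phi$ by real orthogonal $O$ (which leaves the spectrum of $R_\phi$ invariant) to bring the matricization of $\phi$ to $\left[\begin{smallmatrix}\alpha_1 & a\\ -a & \alpha_2\end{smallmatrix}\right]$; a direct computation then shows that $R_\phi$ block-diagonalizes, with $\ket{s_1}$-eigenvalue $a^2+\alpha_1\alpha_2$ and the complementary block $\left[\begin{smallmatrix}\alpha_1^2 & -a^2\\ -a^2 & \alpha_2^2\end{smallmatrix}\right]$ contributing $\tfrac{\alpha_1^2+\alpha_2^2}{2}\pm\sqrt{(\tfrac{\alpha_1^2-\alpha_2^2}{2})^2+a^4}$ (consistent with Lemma~\ref{lem:real_eigenvalues} when $a=0$). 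In the ``symmetric-dominated'' regime the minimum occurs at $a=0$, where the inequality becomes the quadratic form $\lambda_3 t^2-\lambda_1 ts+\lambda_2 s^2\ge 0$ in $t=\abs{\alpha_1},\,s=\abs{\alpha_2}$; this holds for all $t,s$ precisely when $\lambda_1\le 2\sqrt{\lambda_2\lambda_3}$, recovering (c) exactly. In the ``antisymmetric-dominated'' regime the extreme vector is $\ket{\psi^-}$, for which $R_\phi$ has eigenvalues $\{\tfrac12,\tfrac12,-\tfrac12\}$ and the inequality reads $\lambda_1\le\lambda_2+\lambda_3$; since $2\sqrt{\lambda_2\lambda_3}\le\lambda_2+\lambda_3$ by the AM--GM inequality, (c) is the stronger threshold and hence implies it. A monotonicity check in $a^2$ should confirm that no interior value of $a$ beats these two extremes.

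The step I expect to be the main obstacle is the reduction to real test vectors. Unlike in Theorem~\ref{thm:asymppt_from_eigenvalues}, where Lemma~\ref{lem:real_extreme} allowed a restriction to real symmetric $\mathbf{v}$, here $\phi$ ranges over all of $\C^2\otimes\C^2$, and the remark following Lemma~\ref{lem:real_extreme} warns that its real-reduction fails for non-symmetric vectors. Concretely, a genuine relative phase between the symmetric and antisymmetric components of $\phi$ survives the $O\otimes O$ and global-phase freedoms, producing cross terms $[R_\phi]_{0,1}$ and $[R_\phi]_{1,2}$ that vanish in the real canonical form above. Closing this gap will require either carrying the full complex computation of the spectrum of $R_\phi$ through the optimization, or---what I would attempt first---showing that the set of eigenvalue triples realized by $R_\phi$ is the same for complex and for real $\phi$, so that the real analysis already captures the worst case. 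Once that reduction is secured, the AM--GM inequality above is exactly the mechanism that makes the symmetric-sector threshold $\lambda_1\le 2\sqrt{\lambda_2\lambda_3}$ dominate the antisymmetric-sector threshold $\lambda_1\le\lambda_2+\lambda_3$, completing (c)$\Rightarrow$(a).
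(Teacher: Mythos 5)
Your outer structure is fine and is genuinely different from the paper's: you reduce (c)$\Rightarrow$(a) to the two-qubit Horodecki theorem (PPT $\Leftrightarrow$ separability \cite{Sto63,Wor76,HHH96}), so that the whole content becomes showing that absolute \emph{symmetric} PPT forces $(U\rho U^*)^\Gamma \geq 0$ on all of $\C^2\otimes\C^2$, tested via the operators $R_\phi = P_{\vee}(\ketbra{\phi}{\phi})^\Gamma P_{\vee}$. Your real-vector computation is correct (the compression does block-diagonalize, with middle eigenvalue $a^2+\alpha_1\alpha_2$ and the stated $2\times 2$ block), and the comparison of the two thresholds $\lambda_1\le 2\sqrt{\lambda_2\lambda_3}$ and $\lambda_1\le\lambda_2+\lambda_3$ via AM--GM is the right endpoint analysis. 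However, the gap you flag yourself is genuine and sits exactly at the crux. The gauge freedom $\phi\mapsto(O\otimes O)\phi$ (real orthogonal $O$, forced on you because only real $O$ satisfies $\overline{O}=O$ and commutes with $P_{\vee}$) together with a global phase removes only two real parameters, so a normalized $\phi\in\C^2\otimes\C^2$ retains a five-real-parameter family, while your canonical form $\mathrm{mat}(\phi)=\left(\begin{smallmatrix}\alpha_1 & a\\ -a & \alpha_2\end{smallmatrix}\right)$ covers only a three-parameter real slice. For genuinely complex $\phi$ the compression $R_\phi$ acquires nonzero $(1,2)$- and $(2,3)$-entries (proportional to $\mathrm{Im}(a\overline{\alpha_1})$ and $\mathrm{Im}(a\overline{\alpha_2})$, plus contributions from the off-diagonal symmetric coefficient that real orthogonal conjugation cannot remove), which your block-diagonal spectral analysis never sees; and you cannot invoke Lemma~\ref{lem:real_extreme}, since---as the paper warns right after that lemma---its real reduction fails for vectors outside $\Sym{d}$. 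Your fallback (``show the eigenvalue triples of $R_\phi$ realized by complex $\phi$ coincide with those realized by real $\phi$'') is precisely the unproven statement on which the implication rests, and the ``monotonicity check in $a^2$'' excluding interior extrema is likewise asserted rather than performed. So (c)$\Rightarrow$(a) is still open in your write-up.

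For contrast, the paper's proof of (c)$\Rightarrow$(a) sidesteps vector-by-vector testing entirely: it uses the bosonic concurrence of \cite{ESBL02}, an \emph{exact} separability criterion for symmetric two-qubit states, writing $C(\rho)=\max\{0,\nu_1-\nu_2-\nu_3\}$ where the $\nu_i$ are the singular values of $\sqrt{D}\,U^TSU\sqrt{D}$ (with $\rho=UDU^*$), and then proves $\nu_1-\nu_2-\nu_3\le\lambda_1-2\sqrt{\lambda_2\lambda_3}$ from two singular-value inequalities, $\sigma_1(AB)\le\sigma_1(A)\sigma_1(B)$ and $\sigma_2(AB)+\sigma_3(AB)\ge\sigma_2(A)\sigma_3(B)+\sigma_3(A)\sigma_2(B)$ \cite{WX97}, following \cite{VAD01}. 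Because the concurrence formula already encodes the worst case over all (complex) decompositions, no reduction to real test vectors is ever needed. If you want to salvage your route, you must either carry out the full complex spectral analysis of $R_\phi$, or exhibit a deformation from complex to real $\phi$ that provably does not increase the functional $\lambda_1\nu_3+\lambda_2\nu_2+\lambda_3\nu_1$; as it stands, the proposal proves the theorem only against real test vectors, which is strictly weaker than what (a) requires.
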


\begin{proof}
    We showed the equivalence of conditions~(b) and~(c) just before the statement of this theorem, and we discussed why (a) implies~(b) right after Definition~\ref{defn:sym_abs_ppt}.
    
    In order to prove the only remaining implication (that~(c) implies~(a)), we note that it suffices to show that $C(\rho) \leq \lambda_1 - 2\sqrt{\lambda_2\lambda_3}$ for all symmetric states $\rho \in \mathcal{L}(\Sym{2})$, where $C(\rho)$ is the (bosonic) concurrence of $\rho$ \cite{ESBL02}. Indeed, the concurrence of $\rho$ equals $0$ if and only if $\rho$ is separable, so if (c) holds then we would have $C(\rho) \leq \lambda_1 - 2\sqrt{\lambda_2\lambda_3} \leq 0$, so $\rho$ is separable (and in fact, absolutely separable, since this argument applied to every symmetric quantum state with the same eigenvalues).
    
    To prove this inequality, we start with the formula $C(\rho) = \max\{0, \nu_1 - \nu_2 - \nu_3\}$, where $\nu_1 \geq \nu_2 \geq \nu_3 \geq 0$ are the non-negative square roots of the (necessarily real and non-negative) eigenvalues of $\rho S \overline{\rho} S$ \cite{ESBL02} and $S \in \mathcal{L}(\Sym{2})$ is the linear unitary operator that, in the orthonormal basis $\{\mathbf{e_1} \otimes \mathbf{e_1}, (\mathbf{e_1} \otimes \mathbf{e_2} + \mathbf{e_2} \otimes \mathbf{e_1})/\sqrt{2}, \mathbf{e_2} \otimes \mathbf{e_2}\}$ has matrix representation
    \[
        S = \begin{bmatrix}
            0 & 0 & 1 \\
            0 & -1 & 0 \\
            1 & 0 & 0
        \end{bmatrix}.
    \]
    Our goal then becomes to show that $\nu_1 - \nu_2 - \nu_3 \leq \lambda_1 - 2\sqrt{\lambda_2\lambda_3}$.
    
    To this end, we proceed in much the same way as was done in \cite{VAD01}. We first recall that if $A$ and $B$ are square matrices then $AB$ has the same eigenvalues as $BA$. If $\rho$ has spectral decomposition $\rho = UDU^*$ then this tells us that $\rho S \overline{\rho} S = UDU^*S\overline{U}DU^TS$ has the same eigenvalues as $\sqrt{D}U^*S\overline{U}DU^TSU\sqrt{D} = (\sqrt{D}U^TSU\sqrt{D})^*(\sqrt{D}U^TSU\sqrt{D})$. It follows that $\nu_1$, $\nu_2$, and $\nu_3$ are exactly the singular values of $\sqrt{D}U^TSU\sqrt{D}$.
    
    We now make use of the fact that if $A$ and $B$ are square matrices then the singular values of $A$, $B$, and $AB$ (ordered so that $\sigma_1 \geq \sigma_2 \geq \cdots$) satisfy the following inequalities (see \cite[Theorem~2.3.5]{JohALA} for the first, which is well-known, and \cite{WX97} for the second):
    \begin{align*}
        \sigma_1(AB) & \leq \sigma_1(A)\sigma_1(B) \quad \text{and} \\
        \sigma_2(AB) + \sigma_3(AB) & \geq \sigma_2(A)\sigma_3(B) + \sigma_3(A)\sigma_2(B).
    \end{align*}
    If we set $A = \sqrt{D}$ and $B = U^TSU\sqrt{D}$ then unitarity of $U^TSU$ tells us that $A$ and $B$ each have singular values $\sqrt{\lambda_1} \geq \sqrt{\lambda_2} \geq \sqrt{\lambda_3} \geq 0$. By subtracting the inequality $\nu_2 + \nu_3 = \sigma_2(AB) + \sigma_3(AB) \geq \sigma_2(A)\sigma_3(B) + \sigma_3(A)\sigma_2(B) = 2\sqrt{\lambda_2\lambda_3}$ from the inequality $\nu_1 = \sigma_1(AB) \leq \sigma_1(A)\sigma_1(B) = \lambda_1$, we get exactly
    \[
        \nu_1 - (\nu_2 + \nu_3) \leq \lambda_1 - 2\sqrt{\lambda_2\lambda_3},
    \]
    as desired.
\end{proof}

Theorem~\ref{thm:asymppt_from_eigenvalues_2dim} is analogous to the non-symmetric result that a quantum state $\rho \in \mathcal{L}(\mathbb{C}^2 \otimes \mathbb{C}^2)$ is absolutely separable if and only if its eigenvalues $\lambda_1 \geq \lambda_2 \geq \lambda_3 \geq \lambda_4$ satisfy $\lambda_1 \leq \lambda_3 + 2\sqrt{\lambda_2\lambda_4}$ \cite{VAD01}. However, while it is obvious that the sets of (non-symmetric) absolutely separable states and absolutely PPT states are the same in the $d = 2$ case (since separability and PPT are the same in this case), the same equivalence in this symmetric setting, as established by Theorem~\ref{thm:asymppt_from_eigenvalues_2dim}, is perhaps surprising. After all, there are simple examples that show that even though \emph{absolute} symmetric separability is the same as absolute symmetric PPT when $d = 2$, their non-absolute counterparts are not equivalent:

\begin{example}\label{exam:sym_sep_ppt}
    Suppose $\mathbf{v} = (1,0,0,1)/\sqrt{2} \in \Sym{2}$ and $\rho = \mathbf{v}\mathbf{v}^*$. Then $\rho^\Gamma$ has $-1/2$ as an eigenvalue, so $\rho$ is not separable. However, $P_{\vee}\rho^\Gamma P_{\vee} = P_{\vee}/2$ is positive semidefinite.
\end{example}

We now begin discussing how to reduce the number of symmetric matricizations that need to be considered in Theorem~\ref{thm:asymppt_from_eigenvalues} when $d \geq 3$. We note that the quantity~\eqref{eq:eig_sort_minimize} is minimized when the permutation $\pi$ is chosen so that the vectors $(\lambda_{\pi(1)},\lambda_{\pi(2)},\ldots,\lambda_{\pi(d(d+1)/2)})$ and $(\mu_1,\mu_2,\ldots,\mu_{d(d+1)/2})$ are sorted in the opposite order. It thus suffices to only consider permutations $\pi$ (and thus bijections $f$ from the same proof, and thus symmetric matricizations) corresponding to the different possible orderings of the numbers $\{\alpha_i\alpha_j\}$, subject to the constraint $\alpha_1 \geq \alpha_2 \geq \cdots \geq \alpha_d$ (this same argument was used in \cite{Hil07} to reduce the number of linear matrix inequalities that needed to be considered).

We can furthermore multiply $\bm{\alpha} = (\alpha_1,\alpha_2,\ldots,\alpha_d)$ by $-1$, since that does not change the quantity $\bm{\alpha}^* \Lambda \bm{\alpha}$ used in the proof of Theorem~\ref{thm:asymppt_from_eigenvalues}, allowing us to assume without loss of generality that $\alpha_1 \geq |\alpha_d|$ (and thus in particular that $\alpha_1 \geq 0$), reducing the number of orderings that we need to consider even further. For example, when $d = 2$, if $\alpha_1 \geq \alpha_2$ are real numbers with $\alpha_1 \geq 0$ then there are only two possible orderings of the numbers $\alpha_1^2$, $\alpha_1\alpha_2$, and $\alpha_2^2$: we could have either
\[
    \alpha_1^2 \geq \alpha_1\alpha_2 \geq \alpha_2^2 \quad \text{or} \quad \alpha_1^2 \geq \alpha_2^2 \geq \alpha_1\alpha_2,
\]
depending on whether $\alpha_2 \geq 0$ or $\alpha_2 < 0$. These orderings correspond to the symmetric matricizations $\Lambda_6$ and $\Lambda_5$ from Equation~\eqref{eq:lambda_2dim_sym}, respectively. However, we also do not need to consider the case when each $\alpha_j$ is non-negative, since in that case $\bm{\alpha}^* \Lambda \bm{\alpha} \geq 0$ comes for free. In the $d = 2$ case, we thus just need the single ordering $\alpha_1^2 \geq \alpha_2^2 \geq \alpha_1\alpha_2$, corresponding to the single symmetric matricization $\Lambda_5$, as we already noted in the discussion right before Theorem~\ref{thm:asymppt_from_eigenvalues_2dim}.

These same ideas show that in the $d = 3$ case, we do not have to consider all $(d(d+1)/2)! = 6! = 720$ symmetric matricizations of Theorem~\ref{thm:asymppt_from_eigenvalues} to determine whether or not a quantum state $\rho \in \mathcal{L}(\Sym{3})$ is absolutely symmetric PPT. Rather, it suffices to consider only the symmetric matricizations corresponding to the possible orderings of the $6$ real numbers $\{\alpha_i\alpha_j\}$ subject to the constrains that $\alpha_1 \geq \alpha_2 \geq \alpha_3$, $\alpha_1 \geq |\alpha_3|$, and $\alpha_3 < 0$. There are four such orderings:
\begin{align*}
    \alpha_1^2 & \geq \alpha_3^2 \geq \alpha_2\alpha_3 \geq \alpha_2^2 \geq \alpha_1\alpha_2 \geq \alpha_1\alpha_3, \\
    \alpha_1^2 & \geq \alpha_3^2 \geq \alpha_1\alpha_2 \geq \alpha_2^2 \geq \alpha_2\alpha_3 \geq \alpha_1\alpha_3, \\
    \alpha_1^2 & \geq \alpha_1\alpha_2 \geq \alpha_3^2 \geq \alpha_2^2 \geq \alpha_2\alpha_3 \geq \alpha_1\alpha_3, \\
    \alpha_1^2 & \geq \alpha_1\alpha_2 \geq \alpha_2^2 \geq \alpha_3^2 \geq \alpha_2\alpha_3 \geq \alpha_1\alpha_3,
\end{align*}
corresponding to the four symmetric matricizations
\begin{align}\label{eq:3x3_sym_matrx}
    \Lambda_1 & := \begin{bmatrix}
        2\lambda_6 & \lambda_2 & \lambda_1 \\
        \lambda_2 & 2\lambda_3 & \lambda_4 \\
        \lambda_1 & \lambda_4 & 2\lambda_5
    \end{bmatrix}, & \Lambda_2 & := \begin{bmatrix}
        2\lambda_6 & \lambda_4 & \lambda_1 \\
        \lambda_4 & 2\lambda_3 & \lambda_2 \\
        \lambda_1 & \lambda_2 & 2\lambda_5
    \end{bmatrix},\nonumber\\
    \Lambda_3 & := \begin{bmatrix}
        2\lambda_6 & \lambda_5 & \lambda_1 \\
        \lambda_5 & 2\lambda_3 & \lambda_2 \\
        \lambda_1 & \lambda_2 & 2\lambda_4
    \end{bmatrix}, & \Lambda_4 & := \begin{bmatrix}
        2\lambda_6 & \lambda_5 & \lambda_1 \\
        \lambda_5 & 2\lambda_4 & \lambda_2 \\
        \lambda_1 & \lambda_2 & 2\lambda_3
    \end{bmatrix},
\end{align}
respectively. However, we now show that we do not even need to check positive semidefinite of all four of these matrices---it is enough to only check positive semidefiniteness of $\Lambda_1$.

\begin{corollary}\label{cor:asymppt_from_eigenvalues_3dim}
    Suppose $\rho \in \mathcal{L}(\Sym{3})$ has eigenvalues $\lambda_1 \geq \lambda_2 \geq \cdots \geq \lambda_6 \geq 0$. Then $\rho$ is absolutely symmetric PPT if and only if the matrix
    \[
        \begin{bmatrix}
            2\lambda_6 & \lambda_2 & \lambda_1 \\
            \lambda_2 & 2\lambda_3 & \lambda_4 \\
            \lambda_1 & \lambda_4 & 2\lambda_5
        \end{bmatrix}
    \]
    is positive semidefinite.
\end{corollary}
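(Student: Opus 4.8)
The plan is to combine the reduction established just before the statement with the standard criterion that a real symmetric $3\times 3$ matrix with nonnegative diagonal entries is positive semidefinite if and only if all three of its $2\times 2$ principal minors and its determinant are nonnegative. The discussion preceding the corollary already shows that $\rho$ is absolutely symmetric PPT if and only if all four matrices $\Lambda_1,\Lambda_2,\Lambda_3,\Lambda_4$ from Equation~\eqref{eq:3x3_sym_matrx} are positive semidefinite, and the ``only if'' direction of the corollary is immediate (if all four are positive semidefinite then in particular $\Lambda_1$ is). So the entire content is the implication that $\Lambda_1 \succeq 0$ forces $\Lambda_2,\Lambda_3,\Lambda_4 \succeq 0$, and I would prove this by checking the minor/determinant conditions for the latter three matrices using only the ordering $\lambda_1 \geq \cdots \geq \lambda_6 \geq 0$ together with the three inequalities $\lambda_2^2 \leq 4\lambda_3\lambda_6$, $\lambda_1^2 \leq 4\lambda_5\lambda_6$, and $\lambda_4^2 \leq 4\lambda_3\lambda_5$ that positive semidefiniteness of $\Lambda_1$ supplies (these are exactly $\Lambda_1$'s three $2\times 2$ principal minors).

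The $2\times 2$ minors are the easy part. Each $2\times 2$ principal minor of $\Lambda_2$, $\Lambda_3$, $\Lambda_4$ has the form $4\lambda_a\lambda_b - \lambda_c^2$ with $c \in \{1,2,4,5\}$, and in every case one can dominate $\lambda_c^2$ by one of the three quantities $\lambda_1^2,\lambda_2^2,\lambda_4^2$ using the ordering, and then bound the corresponding right-hand side ($4\lambda_5\lambda_6,4\lambda_3\lambda_6,4\lambda_3\lambda_5$, respectively) from below by $4\lambda_a\lambda_b$, again replacing diagonal factors by smaller ones via the ordering. For instance the troublesome-looking $(2,3)$ minor $4\lambda_3\lambda_5 - \lambda_2^2$ of $\Lambda_2$ is handled by $\lambda_2^2 \leq 4\lambda_3\lambda_6 \leq 4\lambda_3\lambda_5$. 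This disposes of all nine minors routinely.

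The determinants are where the real work lies, and I would handle them by the telescoping identities
\[
\det\Lambda_2 - \det\Lambda_1 = 2(\lambda_2^2 - \lambda_4^2)(\lambda_5 - \lambda_6), \qquad \det\Lambda_4 - \det\Lambda_3 = 2(\lambda_1^2 - \lambda_5^2)(\lambda_3 - \lambda_4),
\]
both of which are nonnegative by the ordering, together with
\[
\det\Lambda_3 - \det\Lambda_2 = 2(\lambda_4 - \lambda_5)\big(4\lambda_3\lambda_6 - \lambda_1\lambda_2 + \lambda_4\lambda_5\big).
\]
Since $\det\Lambda_1 \geq 0$, chaining these gives $\det\Lambda_2 \geq \det\Lambda_1 \geq 0$ and $\det\Lambda_4 \geq \det\Lambda_3$, so everything reduces to showing $\det\Lambda_3 \geq 0$, i.e., that the bracket in the last identity is nonnegative (its prefactor $\lambda_4 - \lambda_5$ is already nonnegative). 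This is the main obstacle: the bracket's sign is genuinely not obvious, since $\lambda_1\lambda_2$ could a priori be large. The key step is to extract a bound on $\lambda_1\lambda_2$ from $\Lambda_1$ itself: taking square roots of the minor inequalities gives $\lambda_1 \leq 2\sqrt{\lambda_5\lambda_6}$ and $\lambda_2 \leq 2\sqrt{\lambda_3\lambda_6}$, and multiplying yields $\lambda_1\lambda_2 \leq 4\lambda_6\sqrt{\lambda_3\lambda_5} \leq 4\lambda_3\lambda_6$, where the final step uses $\lambda_5 \leq \lambda_3$. Hence $4\lambda_3\lambda_6 - \lambda_1\lambda_2 \geq 0$, the bracket is nonnegative, and $\det\Lambda_3 \geq \det\Lambda_2 \geq 0$ follows. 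Combined with the nonnegative $2\times 2$ minors and nonnegative diagonals, this establishes $\Lambda_2,\Lambda_3,\Lambda_4 \succeq 0$ and completes the proof.
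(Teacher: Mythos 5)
Your proposal is correct and follows essentially the same route as the paper's own proof: reduce to the four matricizations $\Lambda_1,\ldots,\Lambda_4$, dispose of the $2\times 2$ principal minors via $\Lambda_1$'s minors and the eigenvalue ordering, and chain the determinants using exactly the same telescoping identities, with the key bracket $4\lambda_3\lambda_6+\lambda_4\lambda_5-\lambda_1\lambda_2$ shown nonnegative from $\Lambda_1$'s outer minor (the paper gets $\lambda_1\lambda_2 \leq \lambda_1^2 \leq 4\lambda_5\lambda_6 \leq 4\lambda_3\lambda_6$ directly, avoiding your square-root detour, but both work). As a minor aside, your factorization $\det\Lambda_3-\det\Lambda_2 = 2(\lambda_4-\lambda_5)(4\lambda_3\lambda_6+\lambda_4\lambda_5-\lambda_1\lambda_2)$ is the correct one---the paper's displayed version omits the overall factor of $2$, a typo that does not affect the sign argument.
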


\begin{proof}
    We just need to show that positive semidefiniteness of the matrix from the statement of the theorem, which is $\Lambda_1$ from Equation~\eqref{eq:3x3_sym_matrx}, implies positive semidefiniteness of $\Lambda_2$, $\Lambda_3$, and $\Lambda_4$ from Equation~\eqref{eq:3x3_sym_matrx}.
    
    Positive semidefiniteness of $\Lambda_1$ implies (by looking at its outermost $2 \times 2$ minor) that $4\lambda_6\lambda_5 \geq \lambda_1^2$, which in turn implies positive semidefiniteness of all $2 \times 2$ minors of $\Lambda_2$, $\Lambda_3$, and $\Lambda_4$. By Sylvester's criterion, it thus suffices to show that positive semidefiniteness of $\Lambda_1$ implies $\det(\Lambda_2), \det(\Lambda_3), \det(\Lambda_4) \geq 0$, which we now demonstrate.
    
    \begin{itemize}
        \item[i)] $\det(\Lambda_4)-\det(\Lambda_3) = 2\lambda_4\lambda_5^2+2\lambda_1^2\lambda_3-2\lambda_3\lambda_5^2-2\lambda_1^2\lambda_4=2(\lambda_1^2-\lambda_5^2)(\lambda_3-\lambda_4) \geq 0$ (with non-negativity following from the fact that $\lambda_1 \geq \lambda_5$ and $\lambda_3 \geq \lambda_4$), so $\det(\Lambda_4) \geq \det(\Lambda_3)$.
        
        \item[ii)] $\det(\Lambda_3)-\det(\Lambda_2)=8\lambda_3\lambda_4\lambda_6-8\lambda_3\lambda_5\lambda_6+2\lambda_1\lambda_2\lambda_5-2\lambda_1\lambda_2\lambda_4+\lambda_4^2\lambda_5-\lambda_4\lambda_5^2=(4\lambda_3\lambda_6+\lambda_4\lambda_5-\lambda_1\lambda_2)(\lambda_4-\lambda_5)$. Since we are assuming that $\Lambda_1$ is positive semidefinite, we know that $4\lambda_5\lambda_6 \geq \lambda_1^2$, so $4\lambda_3\lambda_6 \geq \lambda_1\lambda_2$, so $4\lambda_3\lambda_6 + \lambda_4\lambda_5 - \lambda_1\lambda_2 \geq 0$, so $\det(\Lambda_3) \geq \det(\Lambda_2)$.
        
        \item[iii)] $\det(\Lambda_2)-\det(\Lambda_1)=2\lambda_4^2\lambda_6+2\lambda_2^2\lambda_5-2\lambda_2^2\lambda_6-2\lambda_4^2\lambda_5=2(\lambda_5-\lambda_6)(\lambda_2^2-\lambda_4^2) \geq 0$, so $\det(\Lambda_2) \geq \det(\Lambda_1)$.
    \end{itemize}
    It follows that $\det(\Lambda_4) \geq \det(\Lambda_3) \geq \det(\Lambda_2) \geq \det(\Lambda_1) \geq 0$ whenever $\Lambda_1$ is positive semidefinite, which completes the proof.
\end{proof}

It is worth noting that Corollary~\ref{cor:asymppt_from_eigenvalues_3dim} contrasts with the non-symmetric absolute PPT problem where, in the $d = 3$ case, positive semidefiniteness of two different $3 \times 3$ matrices needs to be checked \cite{Hil07}.

When $d \geq 4$, it becomes less clear what a minimal set of positive semidefinite matrices for characterizing the set of absolutely symmetric PPT states is. If we make the same restrictions to the $\bm{\alpha}$ vectors described just before Corollary~\ref{cor:asymppt_from_eigenvalues_3dim} (i.e., $\alpha_1 \geq \alpha_2 \geq \cdots \geq \alpha_d$, $\alpha_1 \geq |\alpha_d|$, and $\alpha_d < 0$), then we find that for $d = 2, 3, 4, 5, \ldots$ there are $1, 4, 26, 330, \ldots$ possible orderings of the $d(d+1)/2$ products $\{\alpha_i\alpha_j\}$, and we thus only need to check positive semidefiniteness of this many $d \times d$ matrices.


However, we saw in Corollary~\ref{cor:asymppt_from_eigenvalues_3dim} that we do not need all $4$ of these matrices in the $d = 3$ case: it suffices to check just positive semidefiniteness of just one matrix. Similarly, in the $d = 4$ case we expect that we do not need all $26$ of these matrices, but rather it suffices to check positive semidefiniteness of just $4$:

\begin{conjecture}\label{conj:d4_appt}
    Suppose $\rho \in \mathcal{L}(\Sym{4})$ has eigenvalues $\lambda_1 \geq \lambda_2 \geq \cdots \geq \lambda_{10} \geq 0$. Then $\rho$ is absolutely symmetric PPT if and only if the following four matrices are positive semidefinite:
    \begin{align*}
        & \begin{bmatrix}
            2\lambda_{10} & \lambda_3 & \lambda_2 & \lambda_1 \\
            \lambda_3 & 2\lambda_4 & \lambda_5 & \lambda_6 \\
            \lambda_2 & \lambda_5 & 2\lambda_7 & \lambda_8 \\
            \lambda_1 & \lambda_6 & \lambda_8 & 2\lambda_9 \\
        \end{bmatrix}, & &
        \begin{bmatrix}
            2\lambda_{10} & \lambda_3 & \lambda_2 & \lambda_1 \\
            \lambda_3 & 2\lambda_4 & \lambda_5 & \lambda_7 \\
            \lambda_2 & \lambda_5 & 2\lambda_6 & \lambda_8 \\
            \lambda_1 & \lambda_7 & \lambda_8 & 2\lambda_9 \\
        \end{bmatrix}, \\
        & \begin{bmatrix}
            2\lambda_{10} & \lambda_6 & \lambda_2 & \lambda_1 \\
            \lambda_6 & 2\lambda_5 & \lambda_4 & \lambda_3 \\
            \lambda_2 & \lambda_4 & 2\lambda_7 & \lambda_8 \\
            \lambda_1 & \lambda_3 & \lambda_8 & 2\lambda_9 \\
        \end{bmatrix}, & &
        \begin{bmatrix}
            2\lambda_{10} & \lambda_7 & \lambda_2 & \lambda_1 \\
            \lambda_7 & 2\lambda_5 & \lambda_4 & \lambda_3 \\
            \lambda_2 & \lambda_4 & 2\lambda_6 & \lambda_8 \\
            \lambda_1 & \lambda_3 & \lambda_8 & 2\lambda_9 \\
        \end{bmatrix}.
    \end{align*}
\end{conjecture}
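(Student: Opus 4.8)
The plan is to mimic the structure of the proof of Corollary~\ref{cor:asymppt_from_eigenvalues_3dim}, but adapted to the heavier combinatorics of the $d=4$ case. By Theorem~\ref{thm:asymppt_from_eigenvalues}, together with the reduction to admissible orderings described just before the conjecture, $\rho$ is absolutely symmetric PPT if and only if all $26$ symmetric matricizations corresponding to the orderings of the products $\{\alpha_i\alpha_j\}$ (under $\alpha_1\geq\alpha_2\geq\alpha_3\geq\alpha_4$, $\alpha_1\geq|\alpha_4|$, and $\alpha_4<0$) are positive semidefinite. The four matrices in the statement are themselves among these $26$, so their positive semidefiniteness is automatically \emph{necessary}; the entire content of the conjecture is the \emph{sufficiency} direction, namely that positive semidefiniteness of these four forces positive semidefiniteness of the remaining $22$.

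The verification tool throughout is the principal-minor criterion: a real symmetric matrix is positive semidefinite if and only if every principal minor is non-negative. For each of the $26$ matrices this splits into four layers. The $1\times1$ minors are the diagonal entries $2\lambda_i\geq0$, which are automatic. The $2\times2$ layer is uniform and easy: every $2\times2$ principal minor has the form $4\lambda_a\lambda_b-\lambda_c^2$, where $\lambda_a,\lambda_b$ are diagonal entries and $\lambda_c$ is off-diagonal, and one checks that the diagonal always contains $\lambda_{10}$ while its second-smallest diagonal entry is $\geq\lambda_9$, so $\lambda_a\lambda_b\geq\lambda_9\lambda_{10}$ and $\lambda_c\leq\lambda_1$; hence every such minor is bounded below by $4\lambda_9\lambda_{10}-\lambda_1^2$, which is exactly the outermost $(1,4)$ minor of each of the four generators and is therefore non-negative by hypothesis. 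This leaves only the $3\times3$ principal minors and the $4\times4$ determinants of the $22$ non-generator matrices to be controlled.

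For those remaining layers I would organize the $22$ matrices into four families, assigning to each non-generator matrix a generator whose positive semidefiniteness is to imply it, and then establish $3\times3$-minor and determinant inequalities comparing the two, in direct analogy with the telescoping chain $\det(\Lambda_4)\geq\det(\Lambda_3)\geq\det(\Lambda_2)\geq\det(\Lambda_1)\geq0$ of the $d=3$ argument. The goal in each family is to show that the generator is the minimal element of its chain for both the relevant $3\times3$ minors and the full determinant, so that non-negativity of the generator's minors propagates to all matrices in the family.

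The hard part will be precisely these $3\times3$-minor and $4\times4$-determinant comparisons. In the $d=3$ case each consecutive determinant difference factored cleanly into a single product of two manifestly non-negative terms, such as $2(\lambda_1^2-\lambda_5^2)(\lambda_3-\lambda_4)$, using only the monotonicity $\lambda_1\geq\cdots\geq\lambda_6$ and, at one step, the generator's outermost-minor constraint $4\lambda_5\lambda_6\geq\lambda_1^2$. For $d=4$ the determinants are degree-four polynomials in ten ordered variables and the $3\times3$ minors are degree three, there are many more pairwise comparisons to make, and the sign-definite factorizations of the differences are no longer guaranteed to be single products; some will likely require invoking the generator's outermost-minor inequality (hence consequences such as $4\lambda_9\lambda_{10}\geq\lambda_1^2$, giving $4\lambda_a\lambda_b\geq\lambda_1\lambda_2$ for suitable diagonal indices $a,b$) midway through the algebra, exactly as was needed in step~(ii) of the $d=3$ proof. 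Carrying out every factorization by hand, and confirming that each family's generator really is the minimal element of its chain at both the $3\times3$ and determinant levels, is the genuine obstacle. I would expect to handle this as a finite but large case analysis, verify the factorizations symbolically, and spot-check the whole reduction numerically against randomly sampled spectra before committing to it---which is presumably why the result is recorded here only as a conjecture.
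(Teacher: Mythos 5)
The statement you are attempting is recorded in the paper as an \emph{open conjecture}: the authors say explicitly that a few million random examples support it but that ``a proof remains elusive,'' and that an argument along the lines of the proof of Corollary~\ref{cor:asymppt_from_eigenvalues_3dim} ``would be extremely long and tedious.'' So there is no paper proof to compare against; the only question is whether your proposal closes the gap, and it does not. What you have actually established is the easy necessity direction (the four listed matrices are among the $26$ admissible matricizations, so their positive semidefiniteness is forced by Theorem~\ref{thm:asymppt_from_eigenvalues}), and the $2\times 2$-minor layer of sufficiency. That layer is correct and is the right generalization of the first step of the $d=3$ proof: in every admissible matricization the $(1,1)$ entry is $2\lambda_{10}$ (since $\alpha_1^2$ is always the largest product), every other diagonal entry is $2\lambda_j$ with $j\leq 9$, and every off-diagonal entry is at most $\lambda_1$, so every $2\times 2$ principal minor is at least $4\lambda_9\lambda_{10}-\lambda_1^2$, which is a principal minor of each of the four generators.

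The genuine gap is everything after that, which is the entire content of the conjecture: that non-negativity of the $3\times 3$ principal minors and determinants of the four generators forces the same for the other $22$ matricizations. Your proposal only describes a strategy for this---partition the $22$ matrices into four telescoping families and factor consecutive differences of minors and determinants into sign-definite pieces---without exhibiting a single such comparison. Moreover, the existence of this clean structure is itself unproven: in the $d=3$ case the determinant differences happened to factor into products of manifestly non-negative terms such as $2(\lambda_1^2-\lambda_5^2)(\lambda_3-\lambda_4)$, but for $d=4$ the relevant differences are degree-three and degree-four polynomials in ten ordered variables, and there is no a priori guarantee that they factor, that each family has a minimal element, or that the generators' minor inequalities supply all auxiliary inequalities needed along the way (conceivably the implication is true but not provable by any pairwise telescoping chain at all). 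Since you yourself flag this as ``the genuine obstacle'' and defer it to an unexecuted case analysis with numerical spot-checking, what you have is a proof plan---essentially the plan the paper's authors already contemplated and could not complete---rather than a proof.
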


Indeed, we have shown numerically that there does not exist a set of just three $4 \times 4$ symmetric matricizations whose positive semidefiniteness implies absolutely symmetric PPT of the associated quantum state, and a few million random examples suggest that the four matrices of Conjecture~\ref{conj:d4_appt} suffice. However, a proof remains elusive (and a proof along the lines of the proof of Corollary~\ref{cor:asymppt_from_eigenvalues_3dim} would be extremely long and tedious).

\section{Spectra of (Decomposable) Symmetric Entanglement Witnesses}\label{sec:witness_spectra}

We now return to the problem of bounding the possible spectra of symmetric entanglement witnesses. We start with the qubit (i.e., $d = 2$) case, where we can \emph{almost} get an exact answer.

\begin{theorem}\label{thm:2x2_eigs}
	Consider the following inequalities involving real numbers $\mu_1 \geq \mu_2 \geq \mu_3$:
	\begin{enumerate}[(a)]
		\item $\mu_2 \geq 0$,
		\item $\mu_3 \geq -\sqrt{\mu_1 \mu_2}$, and
		\item $\mu_3 \geq \begin{cases} -\mu_1/4-\mu_2 & \text{if} \ \ \ \mu_2 < \mu_1/4 \\ -\sqrt{\mu_1\mu_2} & \text{if} \ \ \ \mu_2 \geq \mu_1/4. \end{cases}$
	\end{enumerate}
	If inequalities (a) and (b) hold then there exists a (decomposable) symmetric entanglement witness in $\mathcal{L}(\Sym{2})$ with eigenvalues $\mu_1 \geq \mu_2 \geq \mu_3$. Conversely, the eigenvalues of every decomposable symmetric entanglement witness in $\mathcal{L}(\Sym{2})$ satisfy inequalities (a) and (b), and the eigenvalues of every symmetric entanglement witness in $\mathcal{L}(\Sym{2})$ satisfy inequalities (a) and (c).
\end{theorem}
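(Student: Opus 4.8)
The statement bundles three assertions, which I would prove separately: (i) inequalities~(a) and~(b) are \emph{sufficient} for the existence of a decomposable symmetric entanglement witness with the prescribed spectrum; (ii) the eigenvalues of every \emph{decomposable} symmetric entanglement witness satisfy~(a) and~(b); and (iii) the eigenvalues of every symmetric entanglement witness satisfy~(a) and~(c). Inequality~(a) is common to all three and is immediate: by \cite[Corollary~15]{LJ20} every symmetric entanglement witness in $\mathcal{L}(\Sym{2})$ has at most $\binom{2}{2}=1$ negative eigenvalue, so $\mu_2 \geq 0$. Moreover, whenever $\mu_3 \geq 0$ the operator $W = \mu_1\mathbf{p_1}\mathbf{p_1}^* + \mu_2\mathbf{p_2}\mathbf{p_2}^* + \mu_3\mathbf{z}\mathbf{z}^*$ (for orthonormal eigenvectors in $\Sym{2}$) is positive semidefinite, hence trivially a decomposable symmetric entanglement witness with $X = O$, $Y = W$. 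So in every part the only interesting regime is $\mu_3 < 0$, where~(b) reads $\mu_3^2 \leq \mu_1\mu_2$ (and~(b) in fact forces $\mu_2 > 0$).

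For~(i) with $\mu_3 < 0$ I would build $W$ from a single extreme ray plus a positive semidefinite correction. By Lemma~\ref{lem:real_eigenvalues}, the matrix $P_{\vee}(\mathbf{v}\mathbf{v}^*)^\Gamma P_{\vee}$ with $\mathbf{v} = \alpha_1\mathbf{w_1}\otimes\mathbf{w_1} + \alpha_2\mathbf{w_2}\otimes\mathbf{w_2} \in \RSym{2}$ has eigenvalues $\alpha_1^2, \alpha_1\alpha_2, \alpha_2^2$ with mutually orthogonal eigenvectors $\mathbf{w_1}\otimes\mathbf{w_1}$, $\mathbf{w_1}\otimes\mathbf{w_2}+\mathbf{w_2}\otimes\mathbf{w_1}$, $\mathbf{w_2}\otimes\mathbf{w_2}$. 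Choosing $\alpha_1 > 0 > \alpha_2$ with $\alpha_1\alpha_2 = \mu_3$ forces the negative eigenvalue to equal $\mu_3$; I would then add a positive semidefinite $Y$ that is diagonal in the same (symmetric) eigenbasis, lifting $\alpha_1^2$ to $\mu_1$ and $\alpha_2^2$ to $\mu_2$. This needs $\mu_1 \geq \alpha_1^2$ and $\mu_2 \geq \alpha_2^2$, and since $\alpha_2^2 = \mu_3^2/\alpha_1^2$ leaves me free to pick $\alpha_1^2 \in [\mu_3^2/\mu_2,\ \mu_1]$, such a choice exists exactly when $\mu_3^2/\mu_2 \leq \mu_1$, i.e.\ precisely when~(b) holds. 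The resulting $W = P_{\vee}(\mathbf{v}\mathbf{v}^*)^\Gamma P_{\vee} + Y$ is a decomposable symmetric entanglement witness with spectrum $\{\mu_1,\mu_2,\mu_3\}$.

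For~(ii) I would first reduce to $Y = O$: adding a positive semidefinite $Y$ only increases each eigenvalue \cite[Corollary~4.3.3]{HJ90}, and raising $\mu_1,\mu_2,\mu_3$ can only decrease $-\sqrt{\mu_1\mu_2}$ while increasing $\mu_3$, so both~(a) and~(b) are preserved; hence it suffices to treat $W = P_{\vee}X^\Gamma P_{\vee}$, which is real symmetric by Lemma~\ref{lem:real_extreme} (so I may take $X$ real). Using that $X \mapsto P_{\vee}X^\Gamma P_{\vee}$ is covariant under conjugation by $R\otimes R$ for real orthogonal $R$, I would rotate so that the eigenvector $\mathbf{z}$ for $\mu_3$ becomes $\gamma_1(\mathbf{e_1}\otimes\mathbf{e_1}) + \gamma_2(\mathbf{e_2}\otimes\mathbf{e_2})$. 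Writing $P_{\vee}X^\Gamma P_{\vee}$ out in the symmetric basis (a short computation shows each entry is a fixed simple combination of the entries of $X$), the relation $W\mathbf{z} = \mu_3\mathbf{z}$ identifies $\mu_3$ with the smaller eigenvalue of the $2\times 2$ matrix $G = \left(\begin{smallmatrix} X_{11} & \frac12 X_{33}\\ \frac12 X_{33} & X_{22}\end{smallmatrix}\right)$ (whose entries are three diagonal entries of $X$), while $\mu_1\mu_2 = \det(H)$ for the complementary $2\times 2$ block $H$ of $W$ on $\mathbf{z}^{\perp}$. The remaining task is the inequality $\mu_3^2 \leq \det(H)$, which I would deduce from the $2\times 2$ minors of $X \succeq O$ (such as $X_{22}X_{33} \geq (\operatorname{Re}X_{23})^2$); this last step is a finite but somewhat tedious positivity computation.

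Part~(iii) is the crux and is the only part that must use the defining witness inequality $(\mathbf{v}\otimes\mathbf{v})^*W(\mathbf{v}\otimes\mathbf{v}) \geq 0$ rather than decomposability (indeed~(c) is strictly weaker than~(b) when $\mu_2 < \mu_1/4$, reflecting that non-decomposable witnesses can be ``more negative''). Since the witness property and the spectrum are invariant under $U \mapsto U\otimes U$ for \emph{every} unitary $U$, I would use the Takagi decomposition to bring $\mathbf{z}$ to the form $s_1(\mathbf{e_1}\otimes\mathbf{e_1}) + s_2(\mathbf{e_2}\otimes\mathbf{e_2})$ with $s_1 \geq s_2 \geq 0$ and $s_1^2 + s_2^2 = 1$. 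Writing $\mathbf{s} = (v_1^2,\sqrt2\, v_1 v_2, v_2^2)$ for the image of $\mathbf{v}=(v_1,v_2)$ (so $\langle\mathbf{z},\mathbf{s}\rangle = s_1 v_1^2 + s_2 v_2^2$), the witness inequality becomes $\mu_1|\langle\mathbf{p_1},\mathbf{s}\rangle|^2 + \mu_2|\langle\mathbf{p_2},\mathbf{s}\rangle|^2 \geq |\mu_3|\,|\langle\mathbf{z},\mathbf{s}\rangle|^2$ for all $\mathbf{v}$, so that $|\mu_3| \leq \min_{\mathbf{v}}\big(\mu_1|\langle\mathbf{p_1},\mathbf{s}\rangle|^2 + \mu_2|\langle\mathbf{p_2},\mathbf{s}\rangle|^2\big)\big/\,|\langle\mathbf{z},\mathbf{s}\rangle|^2$. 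The content of~(c) is then that the maximum of this ratio over all admissible configurations (the Takagi data $s_1,s_2$ together with the orthonormal frame $\mathbf{p_1},\mathbf{p_2}$ of $\mathbf{z}^{\perp}$) equals $\sqrt{\mu_1\mu_2}$ when $\mu_2 \geq \mu_1/4$ and $\mu_1/4 + \mu_2$ when $\mu_2 < \mu_1/4$. I expect this minimax to be the main obstacle: it is a genuine optimization of a ratio of quartics in $(v_1,v_2)$, and the threshold $\mu_2 = \mu_1/4$ should emerge from a case split in the inner minimization (completing the square in $|v_1|^2$, $|v_2|^2$ and the cross term) according to whether the minimizing test vector is ``interior'' or forces one projection $\langle\mathbf{p_i},\mathbf{s}\rangle$ to vanish. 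Managing the complex freedom in $\mathbf{v}$ and reducing $\{\mathbf{p_1},\mathbf{p_2}\}$ to the genuinely extremal frame is where the careful work lies.
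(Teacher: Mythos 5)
Your treatment of inequality~(a) and of the sufficiency direction (i) is correct and essentially matches the paper: the paper's construction takes $X$ supported on $\operatorname{span}\{\mathbf{e_1}\otimes\mathbf{e_1},\mathbf{e_2}\otimes\mathbf{e_2}\}$ with $2\times 2$ matricization $\left(\begin{smallmatrix}\mu_1 & \mu_3\\ \mu_3 & \mu_2\end{smallmatrix}\right)$ and $Y=O$, which is just your rank-one-plus-$Y$ construction with the correction absorbed into $X$; either way, PSD-ness of $X$ is exactly~(b). The genuine problems are in parts (ii) and (iii), where in both cases the inequality that carries all the content is deferred rather than proven.

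In (ii), your reductions are sound (eliminating $Y$ by Weyl monotonicity preserves (a) and (b); realification via Lemma~\ref{lem:real_extreme}; the $R\otimes R$ covariance argument; and the identification of $\mu_3$ with $\lambda_{\min}(G)$ and of $\mu_1\mu_2$ with $\det(H)$ are all correct). But the claim $\lambda_{\min}(G)^2 \leq \det(H)$ \emph{is} the theorem at that point, and you only assert it follows from ``$2\times 2$ minors of $X$.'' That is not obviously true: there exist real symmetric matrices with all $2\times 2$ principal minors nonnegative but negative determinant, so the full $3\times 3$ PSD constraint on $X$, together with the coupling $d\gamma_1+e\gamma_2=0$ between the $G$-eigenvector and the off-block entries of $X$, would likely have to enter the computation; nothing in your sketch indicates how. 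The paper avoids this entirely: it proves (b)-necessity for decomposable witnesses as a corollary of Theorem~\ref{thm:dxd_eigenvalues_sew}, an SDP-duality statement, specialized to $d=2$ where only the single matricization $L_1$ is needed and dual feasibility collapses to three scalar inequalities whose solvability in a PSD $Y\in\mathcal{L}(\mathbb{R}^2)$ is precisely $\mu_3\geq-\sqrt{\mu_1\mu_2}$.

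In (iii), your reformulation of the witness condition as $|\mu_3|\leq\min_{\mathbf{v}}(\mu_1|\langle\mathbf{p_1},\mathbf{s}\rangle|^2+\mu_2|\langle\mathbf{p_2},\mathbf{s}\rangle|^2)/|\langle\mathbf{z},\mathbf{s}\rangle|^2$ is correct, but the route you propose---computing the exact max-min of this ratio over all frames and Takagi data---is both not carried out and mis-aimed. The theorem only requires an \emph{upper bound} on that max-min; your assertion that ``the content of~(c) is that the maximum \ldots equals'' the (c)-bound overstates the claim, and if it were literally true it would refute the paper's conjecture that~(b) is the correct characterization (the exact max-min is precisely the open problem). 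The missing idea is the one the paper uses to sidestep your quartic optimization: test $W$ against \emph{absolutely symmetric separable} states. Since $\tr(WU\rho U^*)\geq 0$ for all unitaries $U$ and all such $\rho$, aligning the eigenbases in opposite order gives $\lambda_1\mu_3+\lambda_2\mu_2+\lambda_3\mu_1\geq 0$ for every spectrum with $\lambda_1\leq 2\sqrt{\lambda_2\lambda_3}$ (Theorem~\ref{thm:asymppt_from_eigenvalues_2dim}, whose hard direction is the concurrence argument). Normalizing $\lambda_1=2\sqrt{\lambda_2\lambda_3}$ and $\lambda_2=1$ leaves a single parameter $\lambda_3\in[1/4,1]$, and the two choices $\lambda_3=\mu_3^2/\mu_1^2$ (when $|\mu_3|\geq\mu_1/2$) and $\lambda_3=1/4$ (otherwise) yield exactly the two branches of~(c). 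Without some such substitute for the minimax, part (iii) of your proposal is a plan, not a proof.
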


We conjecture that the eigenvalues of every symmetric entanglement witness in $\mathcal{L}(\Sym{2})$ must satisfy inequalities (a) and (b) (instead of the slightly weaker (a) and (c)), but have been unable to prove it. Inequalities (b) and (c) are the same as each other when $\mu_2 \geq \mu_1/4$, but inequality (c) is strictly weaker than (b) when $\mu_2 < \mu_1/4$, as illustrated in Figure~\ref{fig:2x2_eig_bounds}.

\begin{figure}[!htb]
    \centering
    \begin{tikzpicture}[scale=5]
		\foreach \x in {0.125,0.25,...,1.0} {
			\draw[ultra thin,gray] (2*\x, -1.1) -- (2*\x, 0.05);
			\node[anchor=south] at (2*\x, 0.05) {$\x$};
		}
		\foreach \y in {-0.2,-0.4,-0.6,-0.8,-1.0} {
			\draw[ultra thin,gray] (2*-0.025, \y) -- (2*1.0, \y);
			\node[anchor=east] at (2*-0.025, \y) {$\y$};
		}
		\draw[thick] (2*-0.025, 0) -- coordinate (x axis mid) (2*1.0, 0);
		\node[anchor=east] at (2*-0.025, 0) {$0$};
		\node[above=1cm] at (x axis mid) {$\mu_2$};
		\draw[thick] (2*0, -1.1) -- coordinate (y axis mid) (2*0, 0.05);
		\node[anchor=south] at (2*0, 0.05) {$0$};
		\node[rotate=90, above=1.5cm] at (y axis mid) {Minimal $\mu_3$};
		
		\draw[dashed,samples=26, domain=2*0:2*0.25] plot (\x, {-sqrt(\x/2)});
		\draw[domain=2*0.25:2*1.0, samples=76, /pgf/fpu, /pgf/fpu/output format=fixed] plot (\x,{-sqrt(\x/2)});
		\node[anchor=south west, yshift=-0.05cm] at (2*0.5, {-sqrt(0.5)}) {\begin{tabular}{c} $\mu_3 \geq -\sqrt{\mu_1\mu_2}$ \\ (when $\mu_2 \geq \mu_1/4 = 0.25$) \end{tabular}};
		\draw (2*0,-0.25) -- (2*0.25,-0.5);
		\node[anchor=north east, yshift=1.4cm, xshift=-1.3cm] at (2*0.6, {(-0.25-0.6)}) {\begin{tabular}{c} $\mu_3 \geq -\mu_1/4-\mu_2$ \\ (when $\mu_2 < \mu_1/4 = 0.25$) \end{tabular}};
		\node[circle, fill, inner sep=1.6pt] at (2*0.25, -0.5) {};
		\node[anchor=south west] at (2*0.25, -0.5) {$(0.25, -0.5)$};
	\end{tikzpicture}

    \caption{Lower bound on the least eigenvalue ($\mu_3$) of a symmetric entanglement witness in $\mathcal{L}(\Sym{2})$, as a function of $\mu_2$ (with $\mu_1$ scaled to $1$). The solid line shows inequality~(c) from Theorem~\ref{thm:2x2_eigs}, while the dashed line shows inequality~(b), which we conjecture is tight.}\label{fig:2x2_eig_bounds}
\end{figure}
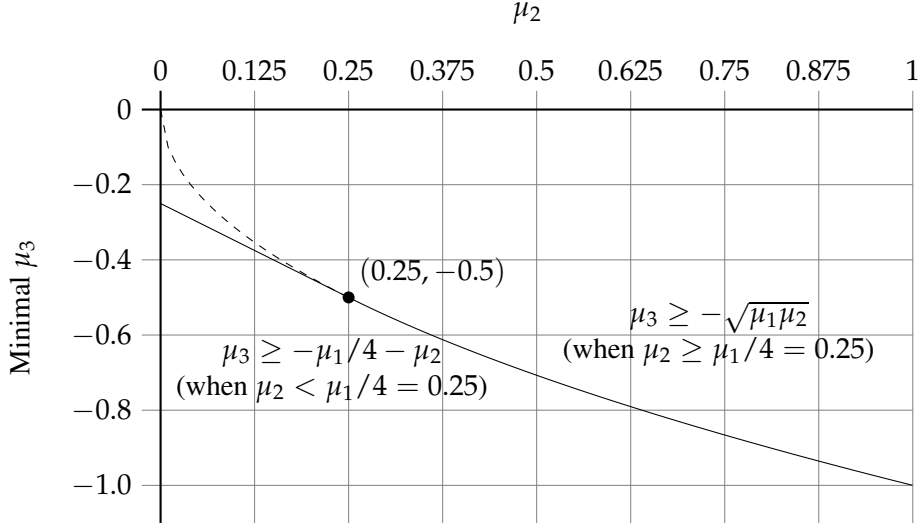

\begin{proof}[Proof of Theorem~\ref{thm:2x2_eigs}.]
	We start by proving that if (a) and (b) hold then we can construct a (decomposable) symmetric entanglement witness $W \in \mathcal{L}(\Sym{2})$ with eigenvalues $\mu_1 \geq \mu_2 \geq \mu_3$. To this end, define
	\begin{align*}
	    X = \begin{bmatrix} \mu_1 & 0 & 0 & \mu_3 \\
		0 & 0 & 0 & 0 \\
		0 & 0 & 0 & 0 \\
		\mu_3 & 0 & 0 & \mu_2 \end{bmatrix},
	\end{align*}
	which is positive semidefinite by inequalities~(a) and~(b)
	(the only part of this claim that is perhaps not immediate is the fact that $\mu_3 \leq \sqrt{\mu_1 \mu_2}$, but this inequality holds due to the fact that $\mu_3 \leq \mu_2 \leq \sqrt{\mu_1 \mu_2}$).
	It is then straightforward to check that $W = P_{\vee} X^\Gamma P_{\vee}$ is a symmetric entanglement witness with eigenvalues $\mu_1$, $\mu_2$, $\mu_3$, and $0$,
	(with corresponding eigenvectors $\mathbf{e_1} \otimes \mathbf{e_1}$, $\mathbf{e_2} \otimes \mathbf{e_2}$, $\mathbf{e_2} \otimes \mathbf{e_1} + \mathbf{e_1} \otimes \mathbf{e_2}$, and $\mathbf{e_2} \otimes \mathbf{e_1} - \mathbf{e_1} \otimes \mathbf{e_2}$, respectively).
	
	Conversely, to see that the eigenvalues of a symmetric entanglement witness $W \in \mathcal{L}(\Sym{2})$ must satisfy inequalities (a) and (c), we first recall from \cite[Corollary~15]{LJ20} that $W$ can have at most $1$ negative eigenvalue, which is equivalent to inequality (a).
	
	To see that inequality~(c) holds, notice that $\tr(W\rho) \geq 0$ holds for all symmetric separable quantum states $\rho \in \mathcal{L}(\Sym{2})$, so it holds for all \emph{absolutely} symmetric separable $\rho$ as well. In particular, this means that
	\begin{align}\label{ineq:2sim_wit_ev_bound}
	    \tr(WU\rho U^*) & \geq 0
	\end{align}
	for all unitaries $U \in \mathcal{L}(\Sym{2})$ and all absolutely symmetric separable $\rho \in \mathcal{L}(\Sym{2})$. By choosing $U$ so that $W$ and $\rho$ are diagonal in the same basis, we see that Inequality~\eqref{ineq:2sim_wit_ev_bound} implies
	\begin{align}\label{ineq:2sim_wit_lammu}
	    \lambda_1\mu_3 + \lambda_2\mu_2 + \lambda_3\mu_1 & \geq 0
	\end{align}
	for all $\lambda_1 \geq \lambda_2 \geq \lambda_3 \geq 0$ that are eigenvalues of an absolutely symmetric separable $\rho \in \mathcal{L}(\Sym{2})$.
	
	By Theorem~\ref{thm:asymppt_from_eigenvalues_2dim}, we know that $\lambda_1 \leq 2\sqrt{\lambda_2\lambda_3}$. In order to determine which values of $\mu_1 \geq \mu_2 \geq \mu_3$ make Inequality~\eqref{ineq:2sim_wit_lammu} hold, we can assume without loss of generality that $\lambda_1 = 2\sqrt{\lambda_2\lambda_3}$ (and hence $2\sqrt{\lambda_2\lambda_3} \geq \lambda_2$, so $\lambda_3 \geq \lambda_2/4$). If we then rescale the $\lambda$s so that $\lambda_2 = 1$, Inequality~\eqref{ineq:2sim_wit_lammu} becomes
	\begin{align}\label{ineq:2sim_wit_lammub}
	    2\sqrt{\lambda_3}\mu_3 + \mu_2 + \lambda_3\mu_1 \geq 0
	\end{align}
	for all real numbers $1/4 \leq \lambda_3 \leq 1$.
	
	If $|\mu_3| \geq \mu_1/2$ then we can choose $\lambda_3 = \mu_3^2/\mu_1^2$, in which case Inequality~\eqref{ineq:2sim_wit_lammub} simplifies to
	\[
	    \mu_3 \geq -\sqrt{\mu_1 \mu_2}.
	\]
	On the other hand, if $|\mu_3| < \mu_1/2$ then we can choose $\lambda_3 = 1/4$, in which case Inequality~\eqref{ineq:2sim_wit_lammub} simplifies to
	\[
	    \mu_3 \geq -(\mu_1/4 + \mu_2),
	\]
	which completes this part of the proof.
	
	All that remains is to show that the eigenvalues of every \emph{decomposable} symmetric entanglement witness satisfy inequality~(b). We leave this to slightly later, right after the proof of Theorem~\ref{thm:dxd_eigenvalues_sew}.
\end{proof}

In higher dimensions, it becomes much trickier to obtain bounds on the possible spectra of symmetric entanglement witnesses. However, we can use semidefinite programming to get bounds on the spectra of \emph{decomposable} symmetric entanglement witnesses, much like semidefinite programming was used to obtain bounds on decomposable (non-symmetric) entanglement witnesses in \cite{JP18}.

Before stating our main result in this area, we need to briefly introduce some new notation. Define function $p^{\uparrow},p^{\downarrow} : \mathbb{R}^n \rightarrow \mathbb{R}^n$ by
\begin{align}\begin{split}\label{eq:defn_p}
    p^{\uparrow}(x_1,x_2,\ldots,x_n) & := \left(\sum_{j=1}^n x_j, \sum_{j=1}^{n-1} x_j, \ldots, \sum_{j=1}^2 x_j, x_1\right), \quad \text{and} \\
    p^{\downarrow}(x_1,x_2,\ldots,x_n) & := \left(\sum_{j=1}^n x_j, \sum_{j=2}^n x_j, \ldots, \sum_{j=n-1}^n x_j, x_n\right),
\end{split}\end{align}
and let $L_k : \mathbb{R}^{d(d+1)/2} \rightarrow \mathcal{L}(\mathbb{R}^d)$ be the linear transformation that sends a vector to its $k$-th symmetric matricization (we do not care about the particular ordering of these matricizations, but just that the $L_1$, $L_2$, $\ldots$, $L_{(d(d+1)/2)!}$ linear transformations capture all possible such matricizations). Up to scaling, the adjoint (dual) transformation of $L_k$ with respect to the standard inner products on $\mathbb{R}^{d(d+1)/2}$ and $\mathcal{L}(\mathbb{R}^d)$ simply reads off the entries from the upper-triangular portion of a matrix in some order. For example, if
\begin{align}\label{eq:d2_special_symmat}
    L_1(x_1, x_2, x_3) = \begin{bmatrix}
        2x_2 & x_1 \\ x_1 & 2x_3
    \end{bmatrix} \quad \text{then} \quad L_1^*\left(\begin{bmatrix}
        y_{1,1} & y_{1,2} \\ y_{1,2} & y_{2,2}
    \end{bmatrix}\right) = 2(y_{1,2},y_{1,1},y_{2,2}).
\end{align}

We then have the following bounds on the eigenvalues of decomposable symmetric entanglement witnesses:

\begin{theorem}\label{thm:dxd_eigenvalues_sew}
    If a decomposable symmetric entanglement witness in $\mathcal{L}(\Sym{d})$ has eigenvalues $\mu_1 \geq \mu_2 \geq \cdots \geq \mu_{d(d+1)/2}$ then there exist positive semidefinite matrices $Y_k \in \mathcal{L}(\mathbb{R}^d)$ for $1 \leq k \leq (d(d+1)/2)!$ such that
    \[
        \sum_k p^{\uparrow}(L_k^*(Y_k)) \leq p^{\downarrow}(\bm{\mu}),
    \]
    where $\bm{\mu} = (\mu_1,\mu_2,\ldots,\mu_{d(d+1)/2})$, $p^{\uparrow}$ and $p^{\downarrow}$ are as in Equation~\eqref{eq:defn_p} (with $n = d(d+1)/2$), and the inequality is meant entrywise.
\end{theorem}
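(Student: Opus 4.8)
The plan is to port the semidefinite-programming duality of the non-symmetric case \cite{JP18} to the symmetric setting, feeding it the absolute-PPT characterization of Theorem~\ref{thm:asymppt_from_eigenvalues} in place of its non-symmetric counterpart. First I would dispose of the additive term: writing a decomposable symmetric entanglement witness as $W = P_{\vee}X^\Gamma P_{\vee} + Y$ with $Y \succeq O$ gives $W \succeq P_{\vee}X^\Gamma P_{\vee}$, so by Weyl monotonicity \cite[Corollary~4.3.3]{HJ90} every eigenvalue of $W$ dominates the corresponding eigenvalue of $P_{\vee}X^\Gamma P_{\vee}$. Since each entry of $p^{\downarrow}$ is a sum of the smallest eigenvalues, $p^{\downarrow}$ is monotone under this domination, so any $Y_k$ that certify the bound for $P_{\vee}X^\Gamma P_{\vee}$ also certify it for $W$. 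It therefore suffices to treat $W = P_{\vee}X^\Gamma P_{\vee}$ with $X \succeq O$ and $P_{\vee}XP_{\vee} = X$.

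Next I would extract the scalar inequalities the spectrum must obey. For any symmetric state $\sigma$ one has $\tr(W\sigma) = \tr\big(X\, P_{\vee}\sigma^\Gamma P_{\vee}\big)$, which is nonnegative whenever $P_{\vee}\sigma^\Gamma P_{\vee} \succeq O$; hence $\tr(W\sigma) \ge 0$ for every symmetric PPT state, and in particular $\tr\big(W U\rho U^*\big) \ge 0$ for every unitary $U \in \mathcal{L}(\Sym{d})$ and every absolutely symmetric PPT state $\rho$. Minimizing the left-hand side over $U$ via the trace inequality \cite[Problem~III.6.14]{Bha97} pairs the eigenvalues of $W$ and of $\rho$ in opposite order, turning the witness condition into $\sum_{j} \mu_j \lambda_{n+1-j} \ge 0$ (with $n = d(d+1)/2$) for every spectrum $\bm{\lambda}$ of an absolutely symmetric PPT state. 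By Theorem~\ref{thm:asymppt_from_eigenvalues} these spectra are exactly the members of the cone $K = \{\bm{\lambda} : L_k(\bm{\lambda}) \succeq O \text{ for all } k\}$, and since $K$ is invariant under coordinate permutations, a short rearrangement-inequality argument upgrades the opposite-order inequalities to $\langle \bm{\mu}, \bm{\lambda}\rangle \ge 0$ for all $\bm{\lambda} \in K$; that is, $\bm{\mu}$ lies in the dual cone $K^*$.

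Finally I would dualize. Because the positive semidefinite cone is self-dual and each $L_k$ is linear with adjoint $L_k^*$, conic duality identifies $K^* = \mathrm{cl}\big(\sum_k L_k^*(\PSD)\big)$, which is precisely what manufactures positive semidefinite matrices $Y_k$ out of $\bm{\mu}$. To arrive at the exact form in the statement I would not simply read off the raw representation of $\bm{\mu}$, but instead run the duality on the partial-sum data directly, exploiting the sortedness $\mu_1 \ge \cdots \ge \mu_n$: the Ky~Fan variational characterization expresses the sums of the smallest eigenvalues of $W$ as $p^{\downarrow}(\bm{\mu})$, and dualizing the chained ordering constraints introduces cumulative (telescoping) multipliers that aggregate the matrix data $L_k^*(Y_k)$ in the matching front-cumulative fashion recorded by $p^{\uparrow}$. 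This is what converts a bare membership statement into the entrywise inequality $\sum_k p^{\uparrow}(L_k^*(Y_k)) \le p^{\downarrow}(\bm{\mu})$.

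The main obstacle I expect is exactly this last bookkeeping: keeping the sorting of $\bm{\mu}$ consistent with the (generally unsorted) entries of each $L_k^*(Y_k)$, so that the telescoping delivers $p^{\uparrow}$ on the left and $p^{\downarrow}$ on the right rather than an equality in $K^*$ that runs the inequality the wrong way. A secondary technical point is the closure appearing in $K^* = \mathrm{cl}\big(\sum_k L_k^*(\PSD)\big)$: I would need to argue, via a Slater-type strong-duality condition for the semidefinite program, that genuine positive semidefinite matrices $Y_k$ attain the bound rather than merely a limiting sequence. Since only the necessity direction is asserted, no realizability converse is required, which keeps the argument on the cleaner side of the duality.
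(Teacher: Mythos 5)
Your proposal follows essentially the same route as the paper's proof: derive the reverse-ordered pairing inequality $\sum_j \mu_j \lambda_{n+1-j} \geq 0$ against the spectra of absolutely symmetric PPT states (via the trace argument, Bhatia's minimization result, and Theorem~\ref{thm:asymppt_from_eigenvalues} characterizing those spectra), then extract the positive semidefinite certificates $Y_k$ and the cumulative-sum inequality by semidefinite-programming duality---where dualizing the ordering constraints $\lambda_1 \geq \cdots \geq \lambda_n \geq 0$ produces exactly the $p^{\uparrow}$/$p^{\downarrow}$ telescoping---with Slater's condition ensuring the bound is attained by genuine PSD matrices. Your preliminary Weyl reduction to $Y = O$ and the detour through the permutation-invariant cone $K^*$ are harmless additions (the paper handles the additive $Y$ term directly inside the trace inequality), and you correctly identify the key subtlety that an exact representation $\bm{\mu} = \sum_k L_k^*(Y_k)$ would run the inequality the wrong way, which is precisely why the duality must be run with the sortedness constraints as you describe.
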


\begin{proof}
    By using the same argument that we used to prove Inequality~\eqref{ineq:2sim_wit_lammu} in the proof of Theorem~\ref{thm:2x2_eigs}, we see that
    \[
        \sum_{j=1}^{d(d+1)/2} \lambda_{n-j+1}\mu_j \geq 0
    \]
    for all $\lambda_1 \geq \lambda_2 \geq \cdots \geq \lambda_{d(d+1)/2} \geq 0$ that are eigenvalues of absolutely symmetric PPT states. It follows that if $\mu_1 \geq \mu_2 \geq \cdots \geq \mu_{d(d+1)/2}$ are the eigenvalues of a symmetric entanglement witness, then the optimal value of the following semidefinite program (in the variable $\bm{\lambda} = (\lambda_1,\lambda_2,\ldots,\lambda_{d(d+1)/2})$; see \cite{Wat18} or \cite[Section~3.C]{JohALA} for an introduction to semidefinite programming) is non-negative:
	\begin{align}\begin{split}\label{sdp:sew_eig_primal}
    	\text{minimize: } & \ \sum_{j=1}^{d(d+1)/2} \lambda_{n-j+1}\mu_j \\
    	\text{subject to: } & \ L_k(\bm{\lambda}) \ \text{is PSD for all $k$} \\
    	& \ \lambda_1 \geq \lambda_2 \geq \cdots \geq \lambda_{d(d+1)/2} \geq 0 \\
    	& \ \sum_{j=1}^{d(d+1)/2} \lambda_j = 1.
	\end{split}\end{align}
	We note that the final constraint $\sum_{j=1}^{d(d+1)/2} \lambda_j = 1$ above is not actually required in this semidefinite program, but it makes it easier to demonstrate that strong duality holds (which we will need to do shortly).
	
	After some simplifying, the dual of this semidefinite program has the following form (in the variables $c \in \mathbb{R}$ and $Y_k \in \mathcal{L}(\mathbb{R}^d)$):
	\begin{align}\begin{split}\label{sdp:sew_eig_dual}
    	\text{maximize: } & \ c \\
    	\text{subject to: } & \ \sum_k p^{\uparrow}(L_k^*(Y_k)) + c\cdot p^{\uparrow}(\bm{1}) \leq p^{\downarrow}(\bm{\mu}) \\
    	& \ Y_k \ \text{is PSD for all $k$},
	\end{split}\end{align}
	where $\bm{1} = (1,1,\ldots,1)$ is the vector whose entries all equal $1$.
	
	All that remains is to show that the primal-dual pair of semidefinite programs~\eqref{sdp:sew_eig_primal} and~\eqref{sdp:sew_eig_dual} satisfy strong duality, so they have the same objective value. Indeed, if we can show that, it would imply that the primal problem~\eqref{sdp:sew_eig_primal} has a non-negative objective value if and only if there is a feasible point of the dual problem~\eqref{sdp:sew_eig_dual} with $c \geq 0$, which is equivalent to existence of PSD matrices $X_k$ such that $\sum_k p^{\uparrow}(L_k^*(Y_k)) \leq p^{\downarrow}(\bm{\mu})$, as desired.
	
	To this end, simply notice that we can find a strictly feasible point of the primal problem~\eqref{sdp:sew_eig_primal} by choosing the $\lambda$'s to be very close (but not equal) to each other. Similarly, we can find a strictly feasible point of the dual problem~\eqref{sdp:sew_eig_dual} by choosing the $X_k$'s to be positive definite and $c$ to be sufficiently large and negative. It follows that the Slater conditions for strong duality \cite[Theorem~1.18]{Wat18} are satisfied, so the semidefinite programs~\eqref{sdp:sew_eig_primal} and~\eqref{sdp:sew_eig_dual} have the same optimal value, so there is a feasible point of the problem~\eqref{sdp:sew_eig_dual} with $c \geq 0$, which completes the proof.
\end{proof}

The criterion of Theorem~\ref{thm:dxd_eigenvalues_sew} can be checked via semidefinite programming, though the size of the semidefinite program grows extremely quickly with the dimension $d$. MATLAB code that implements it using the CVX toolbox \cite{cvx} when $d \leq 5$ (and can thus show that certain sets of eigenvalues can \emph{not} belong the any decomposable symmetric entanglement witness) is available from \cite{Pip21}.

In small dimensions, the criterion of Theorem~\ref{thm:dxd_eigenvalues_sew} simplifies considerably, since you do not need to use every symmetric matricization transformation $L_k$, but only those that are required to show absolute symmetric separability. For example, if $d = 2$ then Theorem~\ref{thm:asymppt_from_eigenvalues_2dim} tells us that we only need to consider the single symmetric matricization $L_1$ from Equation~\eqref{eq:d2_special_symmat}. Theorem~\ref{thm:dxd_eigenvalues_sew} then tells us that if a decomposable symmetric entanglement witness in $\mathcal{L}(\Sym{2})$ has eigenvalues $\mu_1 \geq \mu_2 \geq \mu_3$ then there must exist a positive semidefinite matrix $Y \in \mathcal{L}(\mathbb{R}^2)$ such that
\[
    p^{\uparrow}(L_1^*(Y)) \leq p^{\downarrow}(\bm{\mu}).
\]
If we fill in the details of this vector inequality, we see that this is equivalent to the following set of three scalar inequalities:
\begin{align*}
    y_{2,2} + y_{1,1} + y_{1,2} & \leq \mu_3 + \mu_2 + \mu_1, \\
    y_{1,1} + y_{1,2} & \leq \mu_3 + \mu_2, \\
    y_{1,2} & \leq \mu_3.
\end{align*}

It is a straightforward exercise to show that such a PSD $Y \in \mathcal{L}(\mathbb{R}^2)$ exists if and only if $\mu_3 \geq -\sqrt{\mu_1\mu_2}$ (in which case we can choose $y_{1,2} = \min\{\mu_3,0\}$, $y_{1,1} = \mu_2$, $y_{2,2} = \mu_1$, and otherwise no such PSD $Y$ exists). This is exactly inequality~(b) from Theorem~\ref{thm:2x2_eigs}, which thus completes its proof.

Similarly, if we use the fact from Corollary~\ref{cor:asymppt_from_eigenvalues_3dim} that we only need a single symmetric matricization in the $d = 3$ case, we get the following simplification:
 
\begin{theorem}\label{thm:3x3_eigenvalues_sew}
	If a decomposable symmetric entanglement witness in $\mathcal{L}(\Sym{3})$ has eigenvalues $\mu_1 \geq \mu_2 \geq \cdots \geq \mu_6$ then there exists a positive semidefinite matrix $Y \in \mathcal{L}(\mathbb{R}^3)$ such that the following inequalities all hold:
    \begin{align*}
        y_{1,1} + y_{3,3} + y_{2,3} + y_{2,2} + y_{1,2} + y_{1,3} & \leq \mu_6 + \mu_5 + \mu_4 + \mu_3 + \mu_2 + \mu_1, \\
        y_{3,3} + y_{2,3} + y_{2,2} + y_{1,2} + y_{1,3} & \leq \mu_6 + \mu_5 + \mu_4 + \mu_3 + \mu_2, \\
        y_{2,3} + y_{2,2} + y_{1,2} + y_{1,3} & \leq \mu_6 + \mu_5 + \mu_4 + \mu_3, \\
        y_{2,2} + y_{1,2} + y_{1,3} & \leq \mu_6 + \mu_5 + \mu_4, \\
        y_{1,2} + y_{1,3} & \leq \mu_6 + \mu_5, \\
        y_{1,3} & \leq \mu_6.
    \end{align*}
\end{theorem}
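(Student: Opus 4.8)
The plan is to re-run the semidefinite-programming duality argument from the proof of Theorem~\ref{thm:dxd_eigenvalues_sew}, specialized to $d = 3$ and streamlined using Corollary~\ref{cor:asymppt_from_eigenvalues_3dim}. The essential observation is that Corollary~\ref{cor:asymppt_from_eigenvalues_3dim} tells us that, among sorted eigenvalue vectors $\lambda_1 \geq \cdots \geq \lambda_6 \geq 0$, a symmetric state is absolutely symmetric PPT precisely when the single matrix $\Lambda_1 = L_1(\bm{\lambda})$ of Equation~\eqref{eq:3x3_sym_matrx} is positive semidefinite. Consequently, on the sorted region the constraint ``$L_k(\bm{\lambda})$ is PSD for all $k$'' appearing in the primal problem~\eqref{sdp:sew_eig_primal} collapses to the single constraint that $L_1(\bm{\lambda})$ is PSD, so the entire argument goes through with a primal semidefinite program that carries exactly one PSD constraint.

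First I would write down this reduced primal program in the variable $\bm{\lambda} = (\lambda_1,\ldots,\lambda_6)$: minimize $\sum_{j=1}^{6}\lambda_{7-j}\mu_j$ subject to $L_1(\bm{\lambda})$ PSD, $\lambda_1 \geq \cdots \geq \lambda_6 \geq 0$, and $\sum_j \lambda_j = 1$. Its optimal value is non-negative for the same reason as in the derivation of Inequality~\eqref{ineq:2sim_wit_lammu}, since $\mu_1 \geq \cdots \geq \mu_6$ are the eigenvalues of a symmetric entanglement witness. Next I would take the dual, which now carries a single PSD matrix variable $Y \in \mathcal{L}(\mathbb{R}^3)$ in place of the family $\{Y_k\}$, together with the scalar $c$, and reads: maximize $c$ subject to $p^{\uparrow}(L_1^*(Y)) + c\, p^{\uparrow}(\bm{1}) \leq p^{\downarrow}(\bm{\mu})$ and $Y$ PSD. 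Strong duality follows from Slater's condition exactly as in the proof of Theorem~\ref{thm:dxd_eigenvalues_sew} (the primal is strictly feasible by taking the $\lambda$'s nearly equal, and the dual by taking $Y$ positive definite and $c$ very negative), so non-negativity of the primal value is equivalent to the existence of a PSD $Y$ with $c \geq 0$ satisfying $p^{\uparrow}(L_1^*(Y)) \leq p^{\downarrow}(\bm{\mu})$.

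The remaining step is purely computational: evaluate $L_1^*(Y)$ by matching $\langle L_1(\bm{\lambda}), Y\rangle$ with $\langle \bm{\lambda}, L_1^*(Y)\rangle$ using the explicit index placement of $\Lambda_1$, which reads off $(y_{1,3}, y_{1,2}, y_{2,2}, y_{2,3}, y_{3,3}, y_{1,1})$ up to the overall scaling that may be absorbed into $Y$ (positive semidefiniteness being invariant under positive scaling). Feeding this vector into the definitions of $p^{\uparrow}$ and $p^{\downarrow}$ from Equation~\eqref{eq:defn_p} with $n = 6$ and expanding the resulting vector inequality entrywise then produces exactly the six scalar inequalities in the statement. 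I expect the only genuinely delicate point to be the bookkeeping of the reduction from all matricizations down to $L_1$ alone: one must confirm that discarding every $L_k$ except $L_1$ does not change the primal feasible set on the sorted region (which is precisely what Corollary~\ref{cor:asymppt_from_eigenvalues_3dim}, together with the preceding reduction to four admissible orderings, supplies) and that the single retained matricization is $\Lambda_1$ with the index assignment of Equation~\eqref{eq:3x3_sym_matrx}. Everything downstream of that is a routine transcription of the dual constraint.
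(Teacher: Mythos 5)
Your proposal is correct and follows essentially the same route as the paper: the paper obtains Theorem~\ref{thm:3x3_eigenvalues_sew} precisely by specializing the semidefinite-programming duality of Theorem~\ref{thm:dxd_eigenvalues_sew} to $d=3$, using Corollary~\ref{cor:asymppt_from_eigenvalues_3dim} to replace all matricization constraints by the single constraint that $\Lambda_1 = L_1(\bm{\lambda})$ be positive semidefinite, so the dual carries one PSD variable $Y$ whose constraint expands entrywise to the six stated inequalities. Your identification of $L_1^*(Y)$ as reading off $(y_{1,3}, y_{1,2}, y_{2,2}, y_{2,3}, y_{3,3}, y_{1,1})$ (up to a factor of $2$ absorbed into $Y$) matches the index placement of $\Lambda_1$ in Equation~\eqref{eq:3x3_sym_matrx} and reproduces the theorem exactly.
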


While it is perhaps difficult to use Theorem~\ref{thm:3x3_eigenvalues_sew} analytically, we note again that it is straightforward to check it numerically via semidefinite programming, which we have done in MATLAB at \cite{Pip21} using the CVX toolbox \cite{cvx}.

\subsection{Negative Eigenvalues of Decomposable Symmetric Entanglement Witnesses}

We already mentioned at the end of Section~\ref{sec:decomp_sew} that a symmetric entanglement witness in $\mathcal{L}(\Sym{d})$ has at most $\binom{d}{2} = d(d-1)/2$ negative eigenvalues \cite[Corollary~15]{LJ20}. This bound is known to be tight for symmetric entanglement witnesses, and we now show that it is even tight for symmetric entanglement witnesses of the form $P_{\vee} X^\Gamma P_{\vee}$, where $X \in \mathcal{L}(\mathbb{C}^d \otimes \mathbb{C}^d)$ is positive semidefinite.

\begin{theorem}\label{thm:dsew_max_neg_evals}
    For every integer $d \geq 1$, there exists a positive semidefinite matrix $X \in \mathcal{L}(\mathbb{C}^d \otimes \mathbb{C}^d)$ such that $P_{\vee} X^\Gamma P_{\vee}$ has exactly $d(d-1)/2$ strictly negative eigenvalues.
\end{theorem}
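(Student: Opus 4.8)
The plan is to combine the known upper bound with a matching explicit construction. Since $P_{\vee} X^\Gamma P_{\vee}$ is exactly the $Y = O$ case of a decomposable symmetric entanglement witness (Definition~\ref{defn:dec_sew}), it is a symmetric entanglement witness, so \cite[Corollary~15]{LJ20} guarantees that it has \emph{at most} $d(d-1)/2$ negative eigenvalues. It therefore suffices to exhibit a single positive semidefinite $X$ for which $P_{\vee} X^\Gamma P_{\vee}$ has \emph{at least} $d(d-1)/2$ negative eigenvalues; equality is then forced. (The $d = 1$ case is vacuous since $d(d-1)/2 = 0$.)

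First I would write down the candidate
\[
    X = \sum_{1 \leq i < j \leq d} \big(\mathbf{e_i} \otimes \mathbf{e_i} - \mathbf{e_j} \otimes \mathbf{e_j}\big)\big(\mathbf{e_i} \otimes \mathbf{e_i} - \mathbf{e_j} \otimes \mathbf{e_j}\big)^*,
\]
which is positive semidefinite as a sum of rank-one positive semidefinite terms. The motivation comes from Lemma~\ref{lem:real_eigenvalues}: the symmetric vector $\mathbf{e_i} \otimes \mathbf{e_i} - \mathbf{e_j} \otimes \mathbf{e_j} \in \RSym{d}$ has matricization $\mathbf{e_i}\mathbf{e_i}^* - \mathbf{e_j}\mathbf{e_j}^*$, whose eigenvalues are $+1, -1, 0, \ldots, 0$, so each individual term $P_{\vee}(\cdot)^\Gamma P_{\vee}$ contributes exactly one negative eigenvalue, sitting in the ``off-diagonal'' symmetric direction $\mathbf{e_i} \otimes \mathbf{e_j} + \mathbf{e_j} \otimes \mathbf{e_i}$. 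The hope is that these $\binom{d}{2}$ contributions do not cancel when summed.

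The key step is to compute $X^\Gamma$ in closed form. Expanding one rank-one term and applying the partial transpose yields $\mathbf{e_i}\mathbf{e_i}^* \otimes \mathbf{e_i}\mathbf{e_i}^* + \mathbf{e_j}\mathbf{e_j}^* \otimes \mathbf{e_j}\mathbf{e_j}^* - \mathbf{e_i}\mathbf{e_j}^* \otimes \mathbf{e_j}\mathbf{e_i}^* - \mathbf{e_j}\mathbf{e_i}^* \otimes \mathbf{e_i}\mathbf{e_j}^*$. Summing over $i < j$, each diagonal term $\mathbf{e_a}\mathbf{e_a}^* \otimes \mathbf{e_a}\mathbf{e_a}^*$ occurs $d-1$ times (once per pair containing $a$), while the remaining terms range over all $\mathbf{e_a}\mathbf{e_b}^* \otimes \mathbf{e_b}\mathbf{e_a}^*$ with $a \neq b$. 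Writing $D := \sum_{a} \mathbf{e_a}\mathbf{e_a}^* \otimes \mathbf{e_a}\mathbf{e_a}^*$ for the projection onto the $d$-dimensional ``diagonal'' subspace and $F := \sum_{a,b} \mathbf{e_a}\mathbf{e_b}^* \otimes \mathbf{e_b}\mathbf{e_a}^*$ for the swap operator (so that $F(\mathbf{a}\otimes\mathbf{b}) = \mathbf{b}\otimes\mathbf{a}$), this should telescope to
\[
    X^\Gamma = (d-1)D + \big(D - F\big) = dD - F.
\]
Since $F$ acts as the identity on $\Sym{d}$ (hence $P_{\vee}FP_{\vee} = P_{\vee}$) and $D$ is already supported on $\Sym{d}$ (hence $P_{\vee}DP_{\vee} = D$), conjugating gives $P_{\vee} X^\Gamma P_{\vee} = dD - P_{\vee}$, which as an operator on $\Sym{d}$ is simply $dD - I$. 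Reading off its spectrum finishes the argument: it equals $d-1 > 0$ on the $d$-dimensional range of $D$, and equals $-1 < 0$ on the orthogonal complement of that range inside $\Sym{d}$, a subspace of dimension $\binom{d+1}{2} - d = d(d-1)/2$. Thus $P_{\vee}X^\Gamma P_{\vee}$ has exactly $d(d-1)/2$ strictly negative eigenvalues.

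I expect the main obstacle to be the bookkeeping in the closed-form computation of $X^\Gamma$: correctly tracking the multiplicity $d-1$ of the diagonal contributions and recognizing that the off-diagonal contributions reassemble into $F - D$. Everything else is routine once the clean identity $X^\Gamma = dD - F$ is established, because the relation $F P_{\vee} = P_{\vee}$ collapses the projected operator to $dD - I$ immediately. So essentially the whole difficulty is concentrated in that one algebraic simplification.
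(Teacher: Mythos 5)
Your proof is correct, and it takes a genuinely different---and in fact stronger---route than the paper's. The paper's argument is non-constructive: it invokes Lemma~\ref{lem:npt_sym_sub} (whose proof occupies the appendix) to obtain a $d(d-1)/2$-dimensional subspace $\mathcal{S} \subseteq \Sym{d}$ on which every state is NPT, uses compactness of the set of symmetric PPT states to extract a constant $0 < c < 1$ with $\Tr(P\rho) \leq c$, forms $W = P_{\vee} - \tfrac{1}{c}P$, and then appeals to duality to write $W = P_{\vee}X^\Gamma P_{\vee} + P_{\vee}YP_{\vee}$, finally combining with the $\binom{d}{2}$ upper bound of \cite{LJ20} to pin the count to exactly $d(d-1)/2$. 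Your argument replaces all of this with one explicit matrix: the identity $X^\Gamma = dD - F$ does hold (each rank-one term contributes $\mathbf{e_i}\mathbf{e_i}^* \otimes \mathbf{e_i}\mathbf{e_i}^* + \mathbf{e_j}\mathbf{e_j}^* \otimes \mathbf{e_j}\mathbf{e_j}^* - \mathbf{e_i}\mathbf{e_j}^* \otimes \mathbf{e_j}\mathbf{e_i}^* - \mathbf{e_j}\mathbf{e_i}^* \otimes \mathbf{e_i}\mathbf{e_j}^*$, and the multiplicity bookkeeping is exactly as you describe), so $P_{\vee}X^\Gamma P_{\vee} = dD - P_{\vee}$ has eigenvalue $d-1$ with multiplicity $d$, eigenvalue $-1$ with multiplicity $d(d+1)/2 - d = d(d-1)/2$, and eigenvalue $0$ on the antisymmetric subspace; your worry about cancellation is moot since the $\binom{d}{2}$ terms are simultaneously diagonalized by the basis $\{\mathbf{e_a}\otimes\mathbf{e_a}\} \cup \{\mathbf{e_a}\otimes\mathbf{e_b}+\mathbf{e_b}\otimes\mathbf{e_a}\}_{a<b}$. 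Note that this makes your opening appeal to the \cite{LJ20} upper bound unnecessary, since the multiplicities come out exact by inspection. Your construction buys several things the paper's does not: it is short and explicit; it generalizes the paper's own $d = 2$ construction from the proof of Theorem~\ref{thm:2x2_eigs} (your $X$ at $d = 2$ is exactly their $X$ with $\mu_1 = \mu_2 = 1$, $\mu_3 = -1$); and, most notably, since each vector $\mathbf{e_i}\otimes\mathbf{e_i} - \mathbf{e_j}\otimes\mathbf{e_j}$ lies in $\Sym{d}$, your $X$ satisfies $X = P_{\vee}XP_{\vee}$, so $P_{\vee}X^\Gamma P_{\vee}$ is a decomposable symmetric entanglement witness in the sense of Definition~\ref{defn:dec_sew}. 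This affirmatively settles open problem~(f) in Section~\ref{sec:conclusions}, which the paper supports only with numerics for $d \leq 12$; the paper's duality step cannot enforce $X = P_{\vee}XP_{\vee}$, which is precisely why that question was left open there. What the paper's route buys instead is Lemma~\ref{lem:npt_sym_sub} itself, a structural statement about NPT subspaces of $\Sym{d}$ that the authors flag as being of independent interest.
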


Before we can prove this theorem, we first need to prove the following helper lemma, which we believe is of independent interest.

\begin{lemma}\label{lem:npt_sym_sub}
    There exists a subspace $\mathcal{S} \subseteq \Sym{d}$ of dimension $d(d-1)/2$ such that every quantum state $\rho \in \mathcal{L}(\Sym{d})$ with range contained in $\mathcal{S}$ is such that $\rho^\Gamma$ has a negative eigenvalue. In particular,
    \[
        \mathcal{S} := \mathrm{span}\big\{ (\mathbf{e_i} \otimes \mathbf{e_{i+k}} - \mathbf{e_{i+1}} \otimes \mathbf{e_{i+k+1}}) + (\mathbf{e_{i+k}} \otimes \mathbf{e_i} - \mathbf{e_{i+k+1}} \otimes \mathbf{e_{i+1}}) : 1 \leq i < d, k \geq 0, i+k < d \big\}
    \]
    is such a subspace.
\end{lemma}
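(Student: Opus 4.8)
The plan has two essentially separate parts: computing $\dim\mathcal S$, and proving that no state supported on $\mathcal S$ can be PPT.

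For the dimension I would pass to matricizations, exactly as in the discussion preceding Lemma~\ref{lem:real_eigenvalues}. Writing $\mathbf{h}_{i,j} = \mathbf{e_i}\otimes\mathbf{e_j} + \mathbf{e_j}\otimes\mathbf{e_i}$, each spanning vector is $\mathbf{g}_{i,k} = \mathbf{h}_{i,i+k} - \mathbf{h}_{i+1,i+k+1}$, whose symmetric matricization is the ``discrete derivative'' $E_{i,i+k}+E_{i+k,i} - E_{i+1,i+k+1}-E_{i+k+1,i+1}$ supported on the $k$-th super-diagonal (and its transpose). For fixed $k$ the matrices $\{\mathbf{g}_{i,k}\}_{i=1}^{d-1-k}$ are the consecutive differences of the $d-k$ super-diagonal positions, hence linearly independent and spanning the ``sum-zero'' subspace there; since distinct super-diagonals occupy disjoint matrix entries, the whole family is independent, giving $\sum_{k=0}^{d-2}(d-1-k)=\binom{d}{2}$. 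The same computation shows $\mathcal S$ is exactly the set of symmetric vectors whose matricization has \emph{every} diagonal and super-diagonal summing to zero; equivalently, $\mathcal S = (\mathbb{1}-D\otimes D)\big(\Sym{d-1}\big)$, where $D$ is the nilpotent shift $D\mathbf{e_j}=\mathbf{e_{j+1}}$ and $\Sym{d-1}$ sits in the first $d-1$ coordinates. (Here $\mathbb{1}-D\otimes D$ is invertible, being a nilpotent perturbation of the identity, which also re-confirms the dimension.)

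For the non-positivity property I would argue by contradiction and by induction on $d$. Suppose a state $\rho$ with range in $\mathcal S$ satisfies $\rho^\Gamma\geq 0$; I will show $\rho=0$, contradicting $\tr(\rho)=1$. The seed is the top-right corner: the single entry of the longest super-diagonal is forced to vanish on $\mathcal S$, so the symmetric vector $\mathbf{e_1}\otimes\mathbf{e_d}+\mathbf{e_d}\otimes\mathbf{e_1}$ lies in $\mathcal S^{\perp}$; together with the antisymmetric $\mathbf{e_1}\otimes\mathbf{e_d}-\mathbf{e_d}\otimes\mathbf{e_1}$ (orthogonal to the whole symmetric subspace) this puts both $\mathbf{e_1}\otimes\mathbf{e_d}$ and $\mathbf{e_d}\otimes\mathbf{e_1}$ in $\ker\rho$. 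Since partial transposition fixes the index pair $(1,d)$, this gives a \emph{zero diagonal entry} $(\rho^\Gamma)_{(1,d),(1,d)}=\rho_{(1,d),(1,d)}=0$. A positive semidefinite matrix with a zero diagonal entry has the entire corresponding row and column equal to zero, so the whole $(1,d)$ and $(d,1)$ rows/columns of $\rho^\Gamma$ vanish; translating back through $(\rho^\Gamma)_{(a,b),(c,e)}=\rho_{(a,e),(c,b)}$ yields a block of forced zeros of $\rho$. I would then feed these zeros, together with the support relations $\rho\,\mathbf{t}_k=0$ for the Toeplitz band vectors $\mathbf{t}_k$ spanning $\mathcal S^{\perp}$ and the symmetry $\rho_{(a,b),(c,e)}=\rho_{(b,a),(e,c)}$, back into the same ``zero diagonal kills a row'' mechanism, propagating the vanishing across the entire index-$d$ layer until $\rho(\mathbf{e_a}\otimes\mathbf{e_d})=0$ for every $a$. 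Once that layer is gone, $\rho$ is supported on the vectors of $\mathcal S$ not involving index $d$, which are precisely $\mathcal S_{d-1}\subseteq\Sym{d-1}$, and its partial transpose there is still positive semidefinite; the inductive hypothesis (base case $d=1$, or the trivial one-dimensional $d=2$ case) then forces $\rho=0$.

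The part I expect to be genuinely hard is closing this cascade, i.e.\ proving that the seed really does propagate to annihilate the whole extreme layer. The difficulty is structural rather than computational: since the set of NPT states is not convex, one cannot reduce to pure states (a mixture of NPT states can be PPT), so the argument must exploit the \emph{global} ``all super-diagonals sum to zero'' structure of $\mathcal S$ throughout, combining the positivity of both $\rho$ and $\rho^\Gamma$ with the Toeplitz support relations in the right order. Concretely, after the corner step the naive cascade stalls unless one carefully re-injects the support equations $\rho\,\mathbf{t}_{d-2}=0,\ \rho\,\mathbf{t}_{d-3}=0,\dots$ to create the next forced zero diagonal entries, and verifying that this bookkeeping terminates is the crux. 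An equivalent but (as far as I can tell) no easier reformulation, obtained from the identity $\mathbf{z}^*(\mathbf{v}\mathbf{v}^*)^\Gamma\mathbf{z}=\tr\big((Z^{T}M)^2\big)$ for real matricizations $Z=\mathrm{mat}(\mathbf{z})$, $M=\mathrm{mat}(\mathbf{v})$ (together with Lemma~\ref{lem:real_extreme} to reduce to real $\mathbf{v}$), is that for every nonzero finite family $\{N_m\}\subset\RSym{d}$ with weights $c_m>0$ the quadratic form $Z\mapsto\sum_m c_m\,\tr\big((Z^{T}N_m)^2\big)$ must take a negative value; this is immediate for a single $N_m$ (its matricization is trace-zero, hence indefinite), but establishing it uniformly over arbitrary families again requires the super-diagonal structure and is where the real work lies.
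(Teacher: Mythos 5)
Your dimension count, your identification of $\Delta(\mathcal{S})$ as the symmetric matrices whose main- and super-diagonals each sum to zero, and your ``seed'' observation at the corner are all correct and consistent with the paper's setup. But the actual content of the lemma --- that \emph{every} state with range in $\mathcal{S}$ has non-PSD partial transpose --- is precisely the part you do not prove. Your induction-plus-cascade is a strategy, not an argument: you concede that the cascade ``stalls,'' that closing it ``is the crux,'' and that your quadratic-form reformulation is no easier. That concession is well founded, and in fact the reformulation as you state it (families $\{N_m\}$ that are merely symmetric, or even merely trace-zero) is \emph{false}: in $d=2$, take $N_1 = \frac{1}{\sqrt{2}}\bigl(\begin{smallmatrix}1 & 0\\ 0 & -1\end{smallmatrix}\bigr)$ and $N_2 = \frac{1}{\sqrt{2}}\bigl(\begin{smallmatrix}0 & 1\\ 1 & 0\end{smallmatrix}\bigr)$, i.e.\ the state $\rho = \tfrac{1}{2}(\mathbf{u}\mathbf{u}^*+\mathbf{w}\mathbf{w}^*)$ with $\mathbf{u}=(1,0,0,-1)/\sqrt{2}$, $\mathbf{w}=(0,1,1,0)/\sqrt{2}$; then $\rho^\Gamma$ has eigenvalues $\tfrac12,\tfrac12,0,0$, so the associated form is PSD even though each $N_m$ is trace-zero. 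So the all-diagonals-sum-to-zero structure is not an optional refinement but the essential input, and no part of your writeup shows how to use it for genuine mixtures. That is a real gap.

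The missing idea --- how the paper's appendix closes the argument in one step, with no induction and no propagation of kernel constraints --- is this. Write $\rho = \sum_j \mathbf{v_j}\mathbf{v_j}^*$ with $\mathbf{v_j}\in\mathcal{S}$ and set $X^{(j)}=\Delta(\mathbf{v_j})$. Choose the farthest-to-the-top-right diagonal that is not identically zero \emph{across all $j$ simultaneously}, say of length $L$, and form the $J\times L$ matrix $Z$ whose $j$-th row lists the entries of $X^{(j)}$ along that diagonal. For positions $(a,m)\neq(n,b)$ on this diagonal (with $a<n$, $m<b$), the $2\times 2$ principal submatrix of $\rho^\Gamma$ in rows/columns $\mathbf{e_a}\otimes\mathbf{e_n}$ and $\mathbf{e_b}\otimes\mathbf{e_m}$ has a diagonal entry $\sum_j |x^{(j)}_{a,b}|^2$, which vanishes \emph{automatically} because $(a,b)$ lies on a strictly farther diagonal, and off-diagonal entry $\sum_j x^{(j)}_{a,m}\overline{x^{(j)}_{b,n}}$. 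If these cross terms vanished for all pairs of positions on the chosen diagonal, the columns of $Z$ would be mutually orthogonal; but each row of $Z$ sums to zero (the defining property of $\mathcal{S}$), so the columns of $Z$ sum to the zero vector, and orthogonality then forces every column to vanish --- contradicting the choice of a not-identically-zero diagonal. Hence some pair yields a $2\times 2$ principal minor of $\rho^\Gamma$ equal to $-\bigl|\sum_j x^{(j)}_{a,m}\overline{x^{(j)}_{b,n}}\bigr|^2<0$, so $\rho^\Gamma$ has a negative eigenvalue. Note how this orthogonality-versus-linear-dependence argument handles arbitrary mixtures in one shot --- exactly the non-convexity obstruction you identified as the difficulty --- which is what your cascade was unable to do.
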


Since the proof of the above lemma is quite long, and also quite similar to the proof of the analogous non-symmetric result (\cite[Theorem~1]{Joh13B}), we defer it to the appendix. Instead, we now show how to use this lemma to prove Theorem~\ref{thm:dsew_max_neg_evals}.

\begin{proof}[Proof of Theorem~\ref{thm:dsew_max_neg_evals}.]
    To construct a decomposable symmetric entanglement witness $W \in \mathcal{L}(\Sym{d})$
    with the desired $d(d-1)/2$ negative eigenvalues, let $P$ be the orthogonal projection onto the subspace $\mathcal{S} \subseteq \Sym{d}$ described by Lemma~\ref{lem:npt_sym_sub}.
    
    By that lemma, if $\rho \in \mathcal{L}(\Sym{d})$ is such that $\rho^\Gamma$ is positive semidefinite (for convenience, we will call such $\rho$ ``symmetric PPT'' throughout this proof), then $P \rho P \neq \rho$, so $\Tr(P\rho) < 1$. Since the set of symmetric PPT $\rho$ is compact, there exists a real constant $0 < c < 1$ such that $\Tr(P\rho) \leq c$ for all symmetric PPT $\rho$.
    
    If we define the operator $W = P_{\vee} - \frac{1}{c}P$ then it is easily verified that $W$ has $d(d-1)/2$ negative eigenvalues and $\Tr(W \rho) \geq 0$ for all symmetric PPT $\rho$. Standard duality techniques then imply that we can write $W = P_{\vee} X^\Gamma P_{\vee} + P_{\vee}YP_{\vee}$ for some positive semidefinite $X, Y \in \mathcal{L}(\mathbb{C}^d \otimes \mathbb{C}^d)$. Since $P_{\vee}YP_{\vee}$ is positive semidefinite and $W$ has $d(d-1)/2$ negative eigenvalues, $P_{\vee} X^\Gamma P_{\vee}$ must have at least $d(d-1)/2$ negative eigenvalues. On the other hand, $P_{\vee} X^\Gamma P_{\vee}$ is a symmetric entanglement witness, so it has at most $d(d-1)/2$ negative eigenvalues, which completes the proof.
\end{proof}

The operators described by Theorem~\ref{thm:dsew_max_neg_evals} form a subset of symmetric entanglement witnesses, but not decomposable ones, since it does not require $X = P_{\vee}X P_{\vee}$. However, numerics suggest that if we add this extra restriction in, then we can still get $d(d-1)/2$ negative eigenvalues. For example, if $d = 4$ then the positive semidefinite matrix
\[
    X = \left[\begin{array}{cccc|cccc|cccc|cccc}
         3 & \cdot & -1 & \cdot & \cdot & -2 & \cdot & -1 & -1 & \cdot & \cdot & \cdot & \cdot & -1 & \cdot & \cdot \\
         \cdot & 7 & \cdot & \cdot & 7 & \cdot & 1 & \cdot & \cdot & 1 & \cdot & 3 & \cdot & \cdot & 3 & \cdot \\
         -1 & \cdot & 7 & \cdot & \cdot & 5 & \cdot & \cdot & 7 & \cdot & 1 & \cdot & \cdot & \cdot & \cdot & -1 \\
         \cdot & \cdot & \cdot & 9 & \cdot & \cdot & 6 & \cdot & \cdot & 6 & \cdot & \cdot & 9 & \cdot & \cdot & \cdot \\\hline
         \cdot & 7 & \cdot & \cdot & 7 & \cdot & 1 & \cdot & \cdot & 1 & \cdot & 3 & \cdot & \cdot & 3 & \cdot \\
         -2 & \cdot & 5 & \cdot & \cdot & 6 & \cdot & 1 & 5 & \cdot & \cdot & \cdot & \cdot & 1 & \cdot & \cdot \\
         \cdot & 1 & \cdot & 6 & 1 & \cdot & 7 & \cdot & \cdot & 7 & \cdot & 1 & 6 & \cdot & 1 & \cdot \\
         -1 & \cdot & \cdot & \cdot & \cdot & 1 & \cdot & 7 & \cdot & \cdot & 5 & \cdot & \cdot & 7 & \cdot & -1 \\\hline
         -1 & \cdot & 7 & \cdot & \cdot & 5 & \cdot & \cdot & 7 & \cdot & 1 & \cdot & \cdot & \cdot & \cdot & -1 \\
         \cdot & 1 & \cdot & 6 & 1 & \cdot & 7 & \cdot & \cdot & 7 & \cdot & 1 & 6 & \cdot & 1 & \cdot \\
         \cdot & \cdot & 1 & \cdot & \cdot & \cdot & \cdot & 5 & 1 & \cdot & 6 & \cdot & \cdot & 5 & \cdot & -2 \\
         \cdot & 3 & \cdot & \cdot & 3 & \cdot & 1 & \cdot & \cdot & 1 & \cdot & 7 & \cdot & \cdot & 7 & \cdot \\\hline
         \cdot & \cdot & \cdot & 9 & \cdot & \cdot & 6 & \cdot & \cdot & 6 & \cdot & \cdot & 9 & \cdot & \cdot & \cdot \\
         -1 & \cdot & \cdot & \cdot & \cdot & 1 & \cdot & 7 & \cdot & \cdot & 5 & \cdot & \cdot & 7 & \cdot & -1 \\
         \cdot & 3 & \cdot & \cdot & 3 & \cdot & 1 & \cdot & \cdot & 1 & \cdot & 7 & \cdot & \cdot & 7 & \cdot \\
         \cdot & \cdot & -1 & \cdot & \cdot & \cdot & \cdot & -1 & -1 & \cdot & -2 & \cdot & \cdot & -1 & \cdot & 3
    \end{array}\right] = P_{\vee}XP_{\vee}
\]
is such that $P_{\vee} X^\Gamma P_{\vee}$ has $d(d-1)/2 = 6$ negative eigenvalues (beating the $5$ negative eigenvalues that we were able to find when $X$ was chosen to be rank-$1$ back in Figure~\ref{fig:dimensional_eigenvalue_comparisons}). MATLAB code that similarly finds decomposable entanglement witnesses in $\mathcal{L}(\Sym{d})$ with $d(d-1)/2$ negative eigenvalues (for values of $d \leq 12$ or so) via semidefinite programming and the CVX and QETLAB toolboxes \cite{cvx,qetlab} is provided at \cite{Pip21}.

\section{Conclusions and Open Problems}\label{sec:conclusions}

In this work, we explored two spectral problems that arise from entanglement in symmetric quantum systems: we established bounds on the possible spectra of symmetric entanglement witnesses, and we also established bounds on the possible spectra of states that remain separable under any symmetric (but global) change of basis. There are numerous remaining open problems and avenues for future research extending from this paper:

\begin{enumerate}[(a)]
    \item What can be said about the spectra of \emph{anti}symmetric (i.e., fermionic) entanglement witnesses? We did not consider this problem in this paper since the techniques that we used are not particularly applicable in the fermionic setting where a ``separable'' pure state is one of the form $\mathbf{v} \wedge \mathbf{w} = \mathbf{v} \otimes \mathbf{w} - \mathbf{w} \otimes \mathbf{v}$. It is known that antisymmetric entanglement witnesses in $\mathcal{L}(\mathbb{C}^d \wedge \mathbb{C}^d)$ have at most $\binom{d-2}{2} = (d-2)(d-3)/2$ negative eigenvalues \cite[Corollary~16]{LJ20}, but it is not clear how to obtain non-trivial bounds on those negative eigenvalues.
    
    \item Similarly, what can be said about the spectra of $k$-entanglement witnesses (i.e., witnesses of Schmidt number $k$) \cite{TH00}? This problem seems to be significantly more challenging when $k \geq 2$ than when $k = 1$ for two reasons: the smallest non-trivial case involves matrices of size at least $9 \times 9$, and we cannot make use of the transpose map in this higher-rank setting (since it is not $2$-positive).
    
    \item Can we strengthen Theorem~\ref{thm:2x2_eigs} into an exact characterization of the eigenvalues of a symmetric entanglement witness in $\mathcal{L}(\Sym{2})$? We conjecture that inequalities (a) and (b) from that theorem provide the correct characterization.
    
    \item Can we find an explicit characterization of the minimal set of positive semidefinite matrices that need to be checked in Theorem~\ref{thm:asymppt_from_eigenvalues} in order to ensure that a state is absolute symmetric PPT? We showed that checking positive semidefiniteness of $(d(d+1)/2)!$ different $d \times d$ matrices suffices when the local dimension is $d$, but we expect that far fewer matrices actually suffice. We showed in Theorem~\ref{thm:asymppt_from_eigenvalues_2dim} and Corollary~\ref{cor:asymppt_from_eigenvalues_3dim} that if $d = 2$ or $d = 3$ then it suffices to check positive semidefiniteness of just one $d \times d$ matrix, and we conjectured in Conjecture~\ref{conj:d4_appt} that if $d = 4$ then it suffices to check four $d \times d$ matrices.

    \item The set of absolutely symmetric separable states is trivially contained within the set of absolutely symmetric PPT states. However, it is not clear whether or not these sets are actually equal to each other. We do not know if this question will be easier or harder to solve than the corresponding question in the non-symmetric setting, which is also open \cite{AJR15}.
    
    \item Can we strengthen Theorem~\ref{thm:dsew_max_neg_evals} to require $X = P_{\vee} X P_{\vee}$, so that $P_{\vee} X^\Gamma P_{\vee}$ is a decomposable entanglement witness? One way to prove this would be to strengthen Lemma~\ref{lem:npt_sym_sub} to show that $P_{\vee}\rho^\Gamma P_{\vee}$, not just $\rho^\Gamma$, has a negative eigenvalue.
\end{enumerate}


\noindent \textbf{Acknowledgements.} N.J.\ was supported by NSERC Discovery Grant number RGPIN-2016-04003.

\bibliographystyle{alpha}
\bibliography{bib}

\section*{Appendix: Proof of Lemma~\ref{lem:npt_sym_sub}}
\addcontentsline{toc}{section}{Appendix: Proof of Lemma~\ref{lem:npt_sym_sub}}

We largely follow the proof of \cite[Theorem~1]{Joh13B}. Let $\mathcal{S} \subseteq \Sym{d}$ be the subspace described by the statement of the lemma (which clearly has the correct dimension $\binom{d}{2} = d(d-1)/2$), and let $\Delta : \Sym{d} \rightarrow \mathcal{L}(\mathbb{R}^d)$ be the linear transformation with the property that $\Delta(\mathbf{v} \otimes \mathbf{v}) = \mathbf{v}\mathbf{v}^T$. It is then straightforward to see that $\Delta(\mathcal{S})$ is the space of $d \times d$ symmetric (not Hermitian!) matrices whose forward diagonals all sum to $0$.

Now suppose $\rho \in \mathcal{L}(\Sym{d})$ is a quantum state with range contained in $\mathcal{S}$. Then we can write $\rho = \sum_j \mathbf{v_j}\mathbf{v}_{\mathbf{j}}^*$ for some $\{\mathbf{v_j}\} \subset \mathcal{S}$. We define $W := \rho^\Gamma$, $X^{(j)} := \Delta(\mathbf{v_j})$, we use $w_{a,m;b,n}$ to denote the $(b,n)$-entry of the $(a,m)$-block of $W$ (i.e., $w_{a,m;b,n} = (\mathbf{e_a} \otimes \mathbf{e_b})^*W(\mathbf{e_m} \otimes \mathbf{e_n})$), and we similarly use $x_{a,m}^{(j)}$ to denote the $(a,m)$-entry of $X^{(j)}$ (i.e., $x_{a,m}^{(j)} = \mathbf{e}_{\mathbf{a}}^*X^{(j)}\mathbf{e_m}$). Since $X^{(j)}$ is symmetric, we have $x_{a,m}^{(j)} = x_{m,a}^{(j)}$ for all $a$ and $m$. Our goal is to show that $W$ has a negative eigenvalue, which we do by finding a $2 \times 2$ principal submatrix of it with negative determinant.

Straightforward calculation shows that
\[
	w_{a,m;b,n} = \sum_j x^{(j)}_{a,n} \overline{x^{(j)}_{b,m}},
\]
so the $2 \times 2$ principal submatrix of $W$ corresponding to rows and columns $\mathbf{e_{a}}\otimes\mathbf{e_{n}}$ and $\mathbf{e_{b}}\otimes\mathbf{e_{m}}$ (yes, the indices here are ``swapped'' on purpose) has the form
\[
    Y^{(a,b;m,n)} := \begin{bmatrix}
        w_{a,a;n,n}   &   w_{a,b;n,m} \\
	    w_{b,a;m,n}   &   w_{b,b;m,m}
	\end{bmatrix} = \begin{bmatrix}
        \sum_j \big|x^{(j)}_{a,b}\big|^2   &  \sum_j x^{(j)}_{a,m} \overline{x^{(j)}_{b,n}} \\
	    \sum_j x^{(j)}_{b,n} \overline{x^{(j)}_{a,m}}   &   \sum_j\big|x^{(j)}_{b,m}\big|^2
	\end{bmatrix}.
\]

To find a submatrix of this form that has negative determinant, we recall that each $X^{(j)}$ matrix has its forward diagonals summing to $0$. There thus exists a farthest-to-the-top-right forward diagonal that is not identically $0$. We claim that there exists some choice of $(a,m) \neq (n,b)$ (with $a < n$ and $m < b$ without loss of generality) such that the $(a,m)$- and $(n,b)$-entries are both on this diagonal (which guarantees that $x_{a,b}^{(j)} = 0$ for all $j$, since the $(a,b)$-entry is farther to the top-right), and furthermore $\sum_j x^{(j)}_{a,m} \overline{x^{(j)}_{b,n}} \neq 0$. If we can find such $a$, $b$, $m$, and $n$, then we will be done, since we would then have
\[
    \det(Y^{(a,b;m,n)}) = -\left|\sum_j x^{(j)}_{a,m} \overline{x^{(j)}_{b,n}}\right|^2 < 0.
\]

To prove our claim that we can choose $a$, $b$, $m$, and $n$ so that $\sum_j x^{(j)}_{a,m} \overline{x^{(j)}_{b,n}} \neq 0$, suppose for the sake of establishing a contradiction that $\sum_j x^{(j)}_{a,m} \overline{x^{(j)}_{b,n}} = 0$ for all $(a,m) \neq (n,b)$ with the $(a,m)$- and $(n,b)$-entries on the same forward diagonal. Let $L$ be the number of entries in that forward diagonal, and let $J$ be the number of terms in the sums over $j$ that we have been working with. Arrange the entries $x^{(j)}_{a,m}$ of that forward diagonal into a $J \times L$ matrix $Z$, whose $(j,\ell)$-entry is $x^{(j)}_{a,m}$, where the $(a,m)$ indexes the $\ell$-th entry of the forward diagonal.

Then the condition $\sum_j x^{(j)}_{a,m} \overline{x^{(j)}_{b,n}} = 0$ for all $(a,m) \neq (n,b)$ is equivalent to the columns of $Z$ being mutually orthogonal, while the forward diagonals of $X$ summing to $0$ is equivalent to the rows of $Z$ summing to $0$. However, if the rows of $Z$ sum to $0$, then the columns of $Z$ sum to the zero vector, so they are linearly dependent and cannot possibly be mutually orthogonal. This is the desired contradiction that completes the proof.\hfill $\square$
\end{document}